\newcommand{\cc}{\mathbb{C}}
\newcommand{\all}{\beta}
\title{Solving underdetermined systems with error-correcting
  codes\footnote{\noindent MSC Classification: 94A99, 15B99  \newline Keywords: 
    Underdetermined system, Codes, Signal processing, Compressed sensing} }
\author{Ted
 Hurley\footnote{National Universiy of Ireland Galway, email:
 Ted.Hurley@NuiGalway.ie }}
\date{} 
\begin{document}
\maketitle
\begin{abstract}
In an underdetermined system of equations $Ax=y$, where $A$ is an $m\times n$
matrix, only $u$ of the entries of $y$ with $u < m$ are known. Thus
 $E_jw$, called `measurements',  are known for certain $j\in J \subset
 \{0,1,\ldots,m-1\}$ where 
 $\{E_i, i=0,1,\ldots, m-1\}$ are  the rows of $A$ and $|J|=u$. 
It is required, if possible, to solve the system uniquely when  
$x$ has at most $t$ non-zero entries with $u\geq 2t$.  

Here such systems are considered from an error-correcting 
coding point of view. The unknown  $x$ can be shown to be the error
 vector of a code subject to certain conditions on the rows of the
 matrix $A$. This reduces the problem to finding a suitable decoding
 algorithm which then finds $x$.  

Decoding workable algorithms  are shown to exist, from which  the
 unknown $x$ may be determined, in cases where the known
 $2t$ values are 
 evenly spaced (that is, when the elements of $J$ are in arithmetic
 progression) for classes of matrices satisfying certain 
 row properties. These cases include Fourier $n\times n $ matrices
 where the 
 arithmetic difference $k$ satisfies $\gcd(n,k)=1$, and classes
 of Vandermonde matrices $V(x_1,x_2,\ldots,x_n)$ (with $x_i\neq 0$)  
with arithmetic difference $k$ 
where the ratios $x_i/x_j$ for $i\neq j$ are not $k^{th}$ roots of
 unity.    
The decoding algorithm has complexity $O(nt)$ and in some cases,
including the Fourier matrix cases,  the 
 complexity is  $O(t^2)$. 

Matrices which have the property that the determinant of any square
 submatrix is non-zero are of particular interest. 
 Randomly choosing rows of such matrices can then give $t$
 error-correcting 
 pairs to generate  a `measuring' code $C^\perp=\{E_j | j\in J\}$ 
with  a decoding algorithm which finds $x$. 

This has 
applications to signal processing and compressed sensing. 
 \end{abstract}
\section{Introduction}
Underdetermined systems $Aw=y$ are considered where $A$ is an 
$m\ti n$ matrix, $w$ an $n\ti 1$ unknown vector and 
$u$ entries of $y$ are known with $u<m$. It is given that 
$w$ has at most $t$ non-zero entries and that    $u\geq 2t$.
Thus the vector $w=(\al_1,\al_2, \ldots,
\al_{n})\T $ is  known to have at most $t$ non-zero entries but the positions
and the values of these non-zero entries are unknown. 

Let the rows of $A$ be denoted by 
$\{E_0,E_1, \ldots, E_{n-1}\}$. Hence  
 $E_jw$ are taken or known for $j \in J= \{j_1, j_2, \ldots, j_u\}$   
where $u\geq 2t$, and the problem is to determine $w$, if possible,
from the `measurements' $\{E_{j}w, j\in J\}$. These measurements are
sometimes referred to as  `samples of $w$'. 


This has   applications to signal processing and  
compressed sensing for which there is a huge and extensive
literature. A  signal may be measured or
sampled by rows of a matrix. 
The work by 
Cand\`es, Romberg and Tao, \cite{compress}, is a
basic reference for recent  treatments of compressed sensing.  

 Here a linear algebra approach is taken
 based on error-correcting codes. It is shown that when the $\{E_j\, |
 \, j\in J\}$ generate a code
 $C^\perp$ such that the distance $d(C)$ of the dual code $C$
 satisfies $d(C)\geq 2t+1$, where $t$ is the maximum number of non-zero
 elements of $w$, then $w$ can be obtained by decoding. The problem
 then is to find a suitable decoding algorithm which is efficient and
 stable.    

A general algorithm,  which is a decoding process in disguise,  
for cases where there exist {\em error-correcting pairs} for $C$ (see section
\ref{pairs} below for definition) 
is developed   
in Section \ref{algol}.  In certain cases when the
measurements are evenly spaced,  and with additional 
properties on the rows of $A$,  
error-correcting pairs are explicitly 
shown to exist. In these cases an explicit decoding 
algorithms are  given in   
Algorithm \ref{algor} and 
in 
Algorithm \ref{algor1}. Evenly spaced here means that the elements in $J$ are in
arithmetic sequence. 

When the arithmetic difference $k$ (in $J$) satisfies $\gcd(n,k)=1$
the Fourier $n\times n$ matrix  
 is shown to
 satisfy the conditions and error-correcting pairs are exhibited which
 then solves systems when $A$ is a Fourier matrix. 
 An algorithm for the Fourier $n\times n$ matrix, with the proviso
 that $\gcd(n,k)=1$, is then  
 given in Section 
 \ref{algol1}, Algorithm 
\ref{algor3}.    
When $k=1$ (that is, when the measurements
are taken consecutively) the algorithm for the Fourier matrix case  
is similar to that obtained in \cite{FRI}. This paper \cite{FRI} also makes the
point  ``to  make the algorithm more robust to noise we have to 
increase the number of available samples .. and apply some
denoising algorithm to the samples''. 

The Vandermonde matrix $A=V(x_1,x_2,\ldots, x_n)$, ($x_i\neq 0$), 
in which the quotients
 $x_i/x_j$, for $i\neq j$, are not $k^{th}$ roots of unity,
 where $k$ is the arithmetic
difference in $J$, is shown to satisfy the general requirements and 
error-correcting pairs are explicitly given for these cases. 
 An algorithm for finding a solution of $Aw=y$ for  such a Vandermonde 
matrix $A$ with rows $E_i$ where $\{E_jw | j\in J\}$ are known  
with $J$ in arithmetic sequence with difference $k$ such that
$x_i/x_j$, ($i\neq j, 1\leq i,j \leq n$) is not a $k^{th}$ root of unity), 
is given in Section
 \ref{vander}, Algorithm \ref{algor10}.  

The algorithms in general involve a maximum of $O(nt)$  
operations but in some cases  a maximum of $O(t^2)$ operations only is
required.  The Fourier $n\times n$ matrix case  
requires $\max(O(t^2), O(n\log n))$ operations and the operations 
are known to be particularly efficient and stable. 

The technique involves considering the problem as a coding/decoding problem
and then to find suitable decoding algorithms. A particularly useful 
decoding algorithm involves finding  {\em error-correcting pairs} for the
code. The method of error-correcting pairs is due jointly to Pelikaan
\cite{pell1} and Duursma \& 
K\"otter \cite{koetter}. 

A technique is derived in Section \ref{random} 
to deal with a type of random
selection of rows of a matrix which has the property that 
the determinant of any square submatrix is non-zero. Matrices which satisfy
the condition that the determinant of any square submatrix is non-zero include
the Fourier $n\times n$ matrices where $n$ is prime (Chebotar\"ev's
Theorem), real 
Vandermonde matrices with positive (distinct) 
 entries and Cauchy matrices. 

\section{Coding theory method}  
Consider 
$w$ as the {\em error vector of a code}. As $w$ has
at most $t$ non-zero entries, a $t$-error correcting
code for which $w$ is the error vector is then required. A method 
which can locate and identify the `errors', which are then 
 the entries of $w$, solves for $w$.

A basic reference for coding theory is \cite{blahut}. 
A code $C$ over a field $F$ is a subset of $F^n$ and all codes
considered are linear. 
An $(n,r)$ code is a code of length $n$ and dimension $r$ and an
$(n,r,d)$ code is a code of length $n$, dimension $r$ and (minimum)
distance $d$. An $(n,r,\geq d)$ code is a code of length $n$, dimension $r$ and 
distance $\geq d$. 
An mds (maximal distance separable) code is
an $(n,r)$ code of 
distance $(n-r+1)$, that is, an mds code is an $(n,r,n-r+1)$ code. 

\subsection{Rows generating codes}\label{rows} 
Let the rows of an $m\times n$ matrix  $A$ be denoted by 
$\{E_0,E_1, \ldots, E_{m-1}\}$ and assume these are linearly
independent. Measurements are taken of $Aw$, that is
certain $E_jw$ are taken or known for $j \in J= \{j_1, j_2, \ldots,
j_u\}\subset \{0,1,\ldots,m-1\}$    
and it is given that $u\geq 2t$ where $t$ is the maximum number of
non-zero entries of $w$. It is clear it may be assumed  without loss of
generality that $m=n$ as just a subset of the rows of $A$ are used. 

Let $C^\perp = \langle E_{j_1}, E_{j_2}, \ldots,
E_{j_u}\rangle$. Think of $C^\perp$ as a  code and its dual
(orthogonal complement) is denoted by $\mathcal{C}$.  
 Now $C^\perp$ is an $(n,u)$ code and so $\mathcal{C}$ is an $(n,n-u)$
code which  
has  an $(n-u)\times n$ generator matrix denoted by $C$.  
Thus $v\in \mathcal{C}$ if and only if $E_iv=0$ for each $E_i\in C^\perp$ or
equivalently $v\in \mathcal{C}$ if and only if $\hat{C}v = 0_{u\times 1}$ where
$\hat{C}$  is the $u\times n$ 
matrix with rows consisting of the elements  $\{E_{j_1},
E_{j_2}, \ldots, E_{j_u}\}$. 

Then
$C\hat{C}\T= 0_{(n-u)\times u}$, which is  equivalent to $\hat{C}C\T= 
0_{u\times (n- u)}$, is the set-up for the generator matrix/check
matrix of a code and its dual.  

{\em If  $\mathcal{C}$ is a $t$-error correcting code then it may be used to
obtain $w$}, provided of course a practical decoding algorithm is
available. 

Now $\mathcal{C}$  is an $(n,n-u)$ code and is $t$-error correcting
if its distance is $\geq 2t+1$. The maximum distance that
$\mathcal{C}$ can attain is $u+1$. 
 For $u=2t$ this requires $\mathcal{C}$ to be an $(n,n-2t, 2t+1)$
 code, that is, it must be 
 an mds code. Now $\mathcal{C}$ is an mds $(n,n-2t,2t+1)$ code if and
 only if  its dual $C^\perp$, with matrix 
 $\hat{C}$, is an (mds) $(n,2t,n-2t+1)$ code. 
The check matrix, $\hat{C}$, of $\mathcal{C}$ is an
$(2t\times n)$ matrix. Thus $\mathcal{C}$ has distance $2t+1$ if and
only if any $2t$ 
columns of $\hat{C}$ are linearly independent -- see for example
Corollary 3.2.3 in \cite{blahut} for details on this. 
Thus for $u=2t$ it is required
 that any $2t$ columns of $\hat{C}$, which is a $2t\times n $ matrix, 
are linearly independent.   

For $u>2t$ it is required that $\mathcal{C}$ be a $(n,u,\geq (2t+1))$
code. Now $\hat{C}$, an $u\times n$ matrix, is the check matrix of
$\mathcal{C}$ and thus it is required that any $2t$ columns of $\hat{C}$
be linearly independent. If any $u$ columns of $\hat{C}$ are linearly
independent then of course any $2t$ columns are linearly independent and
$\mathcal{C}$ is at least $t$-error correcting.

There are a number of cases where it can be assured that $\mathcal{C}$
is $t$-error correcting. 

When $A$ satisfies the property that the
determinant 
of any square submatrix of $A$ is non-zero then any choice
of $r$ rows of $A$ 
gives an mds $(n,r,n-r+1)$ code, \cite{hur1}. 
The Fourier $n\times n$ matrix for a prime $n$ has this property by a
result of Chebotar\"ev \footnote{A proof of
this Chebotar\"ev theorem may be found in \cite{isaacs} and proofs
also appear in the expository paper of P.Stevenhagen and H.W Lenstra
\cite{steven}; paper \cite{simple} contains a relatively short
proof. There are several other proofs in the literature some of which
are referred to in \cite{steven}. Paper
\cite{tao} contains a proof of Chebotar\"ev's theorem and refers to it
as `an uncertainty principle'.}. 
Thus as shown in \cite{hur1} any code obtained
by taking $(n-2t)$ rows of this Fourier matrix
 gives an $(n,n-2t,2t+1)$ mds code. Hence when $A$ is the Fourier
 $n\times n$ matrix for $n$ a prime any 
such $\hat{C}$ has the required mds property. 
A Vandermonde real matrix with positive entries has this
 property, Corollary \ref{pos} below.  

In general if $V(x_1,x_2,\ldots,
 x_n)$ is a Vandermonde matrix and 
 the $E_{j_k}$ in $C^\perp$ are evenly distributed with arithmetic 
 difference $k$ such that the ratios $x_i/x_j$ for all $i\neq j$ are not 
$k^{th}$ roots of unity then $C^{\perp}$ is an mds codes, see Corollary
 \ref{unity} and Section \ref{vander} below. 
 For a general  $n\times n$ Fourier  matrix it will be shown in Section
\ref{Fourier} that mds codes are obtained when 
the $E_j$ in $C^\perp$ are evenly spaced with arithmetic difference $k$
satisfying $\gcd(n,k)=1$. 

When $A$ is a Cauchy matrix, it also has the property that the
determinant of any  submatrix is non-zero but this case can be highly
unstable and a  decoding method is not easy to obtain. 
\subsection{Unit-derived codes}
Suppose $AB=1$ for $n\times n$ matrices $A,B$. Then as shown in
\cite{hur3} taking any $r$ rows of $A$ gives a generator matrix of an
$(n,r)$ code and the check matrix may be obtained by deleting the
corresponding $r$ columns of $B$. Alternatively any $r$ rows of $A$
gives the check matrix of an $(n,n-r)$ code whose generator matrix is
obtained by deleting the corresponding $r$ columns of $B$.

This is the situation we have for the underdetermined given system $Aw=y$
when $A$ is an $n\times n$ matrix with inverse $B$.

\subsection{Decode to solve}
Suppose now that $\hat{C}$ has the required property that any $2t$
columns are linearly independent. Then $\mathcal{C}$ has distance $\geq 2t+1$
and so the code can correct the `errors'; it can find the
elements of $w$ using the check matrix $\hat{C}$. The problem is to
find a suitable decoding method, that is, a method to locate and
quantify these errors. The method should be of reasonable complexity
and stable for applications. 

We show now that  
when the measurements are
evenly spaced within certain matrices 
an error-correcting (decoding) method exists
which identifies $w$. In general the identification can be done in at
worst 
$O(tn)$ operations but in some cases it may be done in at worst 
$O(t^2)$ operations. 
In practical applications $t$ is often much smaller than $n$.  

\subsection{Error-correcting pairs}\label{pairs}

The method of {\em error-correcting pairs}, when they can be shown to
exist,  may be used to locate and
determine the `errors', and these `errors' then determine the elements of $w$. 
The method of error-locating and error-correcting pairs is
due jointly to Pellikaan \cite{pell1} and to Duursma and K\"otter \cite{koetter}.
The method used here  is based mainly on that of Pellikaan \cite{pell1}. 

Let $F$ be a field and $\C$ a (linear) code over $F$. Write 
$n(\C)$ for the code length of $\C$, its minimum distance is denoted 
by $d( \C )$ and denote its dimension by $k ( \C )$. 

Now $w_i$ denotes the $i^{th}$ component
of $w\in F^n$. 
For any  $w \in F^n$  define the support of $w$
by $\supp( w ) = \{ i | w_i \neq  0 \}$
and the zero set of $w$ by $z ( w ) = \{ i | w_i = 0 \}$ . The weight of $w$ is
the number of non-zero coordinates of $a$ and denote it by $wt ( a )$. The
number of elements of a set $I$ is denoted by $| I |$. Thus $wt(a)=
|\supp(w)|$.

We say that $w$ has $t$
errors supported at $I$ if $w = c + e$ with $c \in \C$ and $I = \supp( e
)$ and $| I | = t = d ( w , \C )$. For $\C$ a linear code, the vector
space of $F$ linear functionals on $\C$ is denoted by $\C^\vee$. 

The bilinear form $< , >$ is
defined by $< a , b > = \sum_i a_i b_i$. For a subset $C$ of $F^n$, 
the dual $C^\perp$ of $C$ in $F^n$ with respect to the bilinear form $< ,
>$ is defined by $C^\perp= \{ x | < x , c > = 0,   \forall c \in C \}$.

The sum of two elements of $F^n$ is defined by adding corresponding 
coordinates.  Of use in these considerations is what is termed 
the {\em star multiplication} $a * b$ of two elements $a, b \in F^n$
defined by
multiplying corresponding coordinates, that is $( a* b )_i = a_i
b_i$. For subsets $A$ and $B$
of $F^n$ denote the set $\{ a * b | a \in A, b \in B \}$ by $A * B$. If
$A$ is generated by $X$ and $B$ is generated by $Y$ then $A*B$ is
generated by $X*Y$. 
\begin{Definition} 
Let $\C $ be a linear code in $F^n$. Define the syndrome map
of the code $\C$ by $s : F^n\rightarrow ( \C^\perp )^\vee  , w \mapsto 
 ( v \mapsto < v , w > )$.

\end{Definition}

For a received word
$w \in F^n$ we call $s ( w )$ the syndrome of $w$ with respect to the code $C$.


\begin{Definition} Let $A, B$ and $C$ be linear codes in $F^n$.
 Define the {\em error locator map} $E_w$ of a received word $w$ with respect to
the code $C$  by $E_w : A \rightarrow B^\vee ,  
a \mapsto ( b \mapsto < w , a*b > )$.
\end{Definition}
Remark: If $A * B \subseteq C^\perp$ and $w$ is a word with error $e$, 
then $E_w = E_e$.

\begin{Definition} 
Suppose $I = \{ i_1, i_2, \ldots, i_t \}$ , where $1 \leq
i_1 < . . . < i_t \leq  n$ . Let $A$ be a linear code in $F^n$ . Define 
$A ( I ) = \{ a \in A | a_i = 0,  \forall i \in I \}$.  
\end{Definition}
\begin{Definition}
Define the projection map $\pi_I : F^n \rightarrow F^t$ by $ \pi_I ( w
) =  ( w_{i_1} ,\ldots, w_{i_t})$. Define $A_I = \pi_I(A)$.
Let $e \in F^n$. Denote $\pi_I ( e*A )$ by $eA_I$. 
\end{Definition}
\begin{Definition}
Suppose $I= \{ i_1, i_2, \ldots, i_s\}$. Define the inclusion map
$i_I: F^t\rightarrow F^n$
by mapping the $j^{th}$ component, $w_j$ of $w$ into the
$i_j^{th}$ coordinate for all $j=1,2,\ldots,t$ and zeros everywhere
else.

\noindent Define the restricted syndrome map
$s_I: F^t \rightarrow (C^\perp)^\vee$ 
by $s_I=s*i_I$. 

\end{Definition}
\begin{Definition}\label{error}
Let $A, B$ and $C$ be linear codes in $F^n$. We call $( A, B
)$ a {\em $t$-error correcting pair for $C$} if 
\\ 1) $A * B \subseteq C^\perp $ \\ 2) $k ( A ) > t$ \\  3) $d
( A ) + d ( C ) > n$, \\ 4) $d ( B^\perp) > t$. 

\end{Definition}

\begin{Definition}\label{maps}For an element $w\in F^n$ define $E_w : A
  \rightarrow B^\vee, a\mapsto (b\mapsto <w,a*b>)$. 
\end{Definition}
Now refer to the paper \cite{pell1} and in particular Proposition 2.11
therein. The paper contains the following algorithm, Algorithm
2.3, for locating
and determining the values of errors in the code $C$ when
error-correcting pairs exist for $C$:

\begin{Algorithm}\label{pell} (see \cite{pell1}, Algorithm 2.3:):
\begin{itemize}
\item[1.1] Compute $\ker( E_w )$.  
\item[1.2] If $\ker( E_w ) = 0$, then goto 3.2.  
\item[1.3] If $\ker (E_w ) \neq  0$,
then choose a nonzero element $a \in \ker( E_w )$. 
\item[] Let $J = z ( a )$.
\item[2.1] Compute the space of solutions of $s_J ( x ) = s ( w )$.  
\item[2.2] If $s_J( x ) = s ( w )$ has no or more than one solution then
 goto 3.2. \item[2.3] If
$s_J( x ) = s ( w )$ has the unique solution $x_0$ , then compute $wt ( x_0
)$. 
\item[2.4] If $wt ( x_0 ) > t$ , then goto 3.2.  
\item[3.1] Print: ``The
received word is decoded by''; Print: $w-i_J ( x_0 )$; goto 4.
\item[3.2] Print: ``The received word has more than $t$ errors.''  
\item[4] End.

\end{itemize}

\end{Algorithm}
In our case the actual errors are the values required and so 3.1 is
changed accordingly. Case 3.2 will not arise as by assumption $w$ has
at most $t$ non-zero entries or else it will show up pointing out an
error in this assumption.

\section{Solve the system of equations by decoding}\label{solve}
 Recall the star product $u * v$ of two vectors $u, v \in F^n$. This is
 defined by 
multiplying corresponding coordinates, that is $( u* v )_i = u_iv_i$. 
For subsets $U$ and $V$
of $F^n$ denote the set $\{ u * v | u \in U, v \in V \}$ by $U * V$. 

Consider now a matrix $A$ with rows $\{E_0,E_1, \ldots,
E_{n-1}\}$. Assume  the matrix $A$ satisfies conditions 
\ref{eqs1} and \ref{eqs2} as follows:
\begin{enumerate}[label=(\alph*),ref=(\alph*)]
\item $E_i*E_j =  E_{i+j}$ for $i+j\leq (n-1)$. 
      \label{eqs1}
\item Let  $J\subset \{0,1,\ldots,n-1\}$ be in arithmetic
 sequence with $|J|=r$. Then the code generated by $
\{E_j, j\in J\}$ is an mds $(n,r,n-r+1)$ code.\label{eqs2}  
\end{enumerate}

In the above condition \ref{eqs1} it is required that $i+j\leq (n-1)$. 
where $A$ has rows $\{E_0,E_1,\ldots, E_{n-1}\}$. When for example $A$
is the Fourier $n\times n$ matrix then $E_{i+j}$ is always defined with
$E_{i+j}=E_{i+j \mod n}$.  In other cases also $E_i$ may be defined
for all $i\in \Z$ where $E_i, 0 \leq i \leq (n-1)$ correspond to the
rows of $A$ as for example
when $A$ is a Vandermonde matrix. 
In such  cases the conditions \ref{eqs1} and 
 \ref{eqs2} may be  replaced as follows. Let $A$ have rows
 $\{E_0,E_1,\ldots, E_{n-1}\}$ such that $E_{i}$ are defined for
 $i\geq 0$ (which coincide with rows of $A$ for $0\leq i \leq n-1$).
Assume $A$ satisfies conditions \ref{eqs3} and \ref{eqs4} as follows. 

\begin{enumerate}[label=(\Alph*),ref=(\Alph*)]
\item $E_i*E_j =  E_{i+j}$. 
      \label{eqs3}
\item Any  $J\subset \{0,1,\ldots,n-1\}$ in arithmetic
 sequence is such that the code generated by $
\{E_j, j\in J\}$ is an mds $(n,r,n-r+1)$ code where $|J|=r$.\label{eqs4}
\end{enumerate}

Only a  subset of the rows of $A$ are used in the general
theory. We may assume $A$ has first row $E_0$ by the following
consideration. 
Suppose $A$ has  rows numbered $\{E_1,E_2,\ldots
 E_{n-1}\}$ satisfying conditions \ref{eqs1} and \ref{eqs2} or conditions
 \ref{eqs3} and \ref{eqs4} with $1\leq i,j$.  Introduce a first row
 $E_0$ into $A$ where $E_0$ is the $1\times n$ vector consisting of all $1's$; this new matrix will  still
be  referred  to as $A$ and satisfies the required conditions with
$0\leq i,j$.  

Assume then from now on in this section that 
the matrix $A$ satisfies conditions \ref{eqs1} and
 \ref{eqs2} or where appropriate conditions \ref{eqs3} and
 \ref{eqs4}.   Matrices which satisfy conditions \ref{eqs1} and $\ref{eqs2}$ or
 conditions \ref{eqs3} and \ref{eqs4} are
given in subsequent sections.

Rows of $A$ are given 
to form $C^\perp=\{E_j\, | \, j\in J\}$, as in
Section \ref{rows}, 
where now  the $E_{j}$ are
evenly distributed, that is,  $C^\perp = \langle E_i, E_{i+j}, E_{i+2j},
 \ldots E_{i+(2t-1)j}\rangle$. (It is assumed that
 $2t< n$ and  that $E_k$ are defined for $0\leq k
 \leq i+(2t-1)j$.) 


 With this set-up it is possible to get a
  $t$-error correcting pair, (see 
 definition \ref{error}), for $\mathcal{C}$ the dual  code of 
 $C^{\perp}$. 
 In these cases the vector $w$ (from $Aw$ where rows $E_jw, j\in J$ are
 known) which has at most $t$ non-zero entries, may be
 obtained by applying the method of Algorithm \ref{pell} above due to Pellikaan
 \cite{pell1} to give an appropriate implementable algorithm in which to
 find $w$. It will be shown that the solution may be obtained in at
 most $O(tn)$
 operations and in some cases in at most $O(t^2)$ operations. 

 Take initially the case  $C^\perp
= \langle E_1,E_2,\ldots, E_{2t}\rangle$, 
that is,  the $E_i$ are consecutive starting at $E_1$; the more
general case will be dealt with similarly.

\begin{theorem}\label{correct2} Let $C^\perp=\langle E_1,E_2, \ldots,
  E_{2t}\rangle$  
with $2t\leq n$ and $\mathcal{C}$ is the dual of the code generated by
$C^\perp$.  Define $U=\langle E_1,E_2,
\ldots, E_{t+1}\rangle, 
 V= \langle E_0, E_1, \ldots, E_{t-1}\rangle$. Suppose that $C^\perp, U,V$
 generate mds codes. Then $(U,V)$ is a $t$-error
 correcting pair for $C$.
 \end{theorem}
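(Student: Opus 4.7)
The plan is to verify the four defining clauses of a $t$-error correcting pair (Definition \ref{error}) one by one, using the star-product rule $E_i \per E_j = E_{i+j}$ from \ref{eqs1} together with the three mds hypotheses.

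For clause (1), $U \per V$ is generated by the set $\{E_i \per E_j : 1 \le i \le t+1,\ 0 \le j \le t-1\}$. Applying $E_i \per E_j = E_{i+j}$, the index $i+j$ runs precisely over $\{1, 2, \ldots, 2t\}$, so this generating set coincides with $\{E_1,\ldots,E_{2t}\}$, giving $U \per V = C^\perp \subseteq C^\perp$. I should verify here that $i+j \le n-1$ holds so that \ref{eqs1} applies; since $2t < n$ (noted in the excerpt just before the theorem) this is automatic. If one prefers to work in the setting of \ref{eqs3}, the bound is not needed at all.

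The remaining clauses follow by reading off parameters from the mds assumptions. Since $C^\perp$ is mds of dimension $2t$, it is an $(n,2t,n-2t+1)$ code, so its dual $\mathcal{C}$ is $(n,n-2t,2t+1)$ and in particular $d(\C) = 2t+1$. Since $U$ is generated by $t+1$ rows $E_1,\ldots,E_{t+1}$ of $A$ and yields an mds code, it is an $(n,t+1,n-t)$ code, giving $k(U)=t+1 > t$ (clause (2)) and $d(U)+d(\C) = (n-t)+(2t+1) = n+t+1 > n$ (clause (3)). Similarly $V$ is mds $(n,t,n-t+1)$, so its dual $V^\perp$, which is again mds, is $(n,n-t,t+1)$, yielding $d(V^\perp)=t+1 > t$ (clause (4)).

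There is no real obstacle: the proof is an almost mechanical unpacking of the definitions, with the only non-cosmetic input being the star-product identity together with the standard fact that the dual of an mds code is mds. The one place to be careful is confirming that $U$, $V$, and $C^\perp$ really have the nominal dimensions $t+1$, $t$, and $2t$ respectively, which follows because the mds hypothesis forces the given generating sets $\{E_i\}$ to be linearly independent (otherwise the codes would have strictly smaller dimension and the mds distance would fail).
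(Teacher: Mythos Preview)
Your proof is correct and follows essentially the same approach as the paper: verify $U\per V\subseteq C^\perp$ via the identity $E_i\per E_j=E_{i+j}$, then read off the parameters of $U$, $V$, $V^\perp$, and $\mathcal{C}$ from the mds hypotheses (using that the dual of an mds code is mds) to check the four clauses of Definition~\ref{error}. Your version is slightly more careful in checking the index bound $i+j\le n-1$ and in noting that the mds hypothesis forces the generating sets to have the nominal dimensions, but the argument is the same.
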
 
\begin{proof} Now $E_i*E_j= E_{i+j}$.  
  Then $A*B \subseteq \langle E_1,
E_2, \ldots, E_{2t}\rangle \subseteq C^\perp$.

Note that a code is an mds code if and only if its dual is an mds code.

Now $C$ is an $(n,n-2t, 2t+1)$ code, $U$ is an $(n,t+1,n-t)$
code,   $V$ is an $(n,t,n-t+1)$ code and $V^\perp$ is an $(n,n-t,t+1 )$
code.
Thus $k(U)= t+1 >t, d(U)+d(C) = (n-t)+(2t+1)= n+t+1 >n, d(V^\perp) =
t+1> t$ and {\em so $(U,V)$ is a $t$-error correcting pair for $C$}
 (see Definition \ref{error}). 
\end{proof}

In the general case we have the following. The proof is similar to the
proof of Theorem \ref{correct2} above. Let $E_0$ be the vector with
all $1^s$ as entries. The suffices $lj$ in the following theorems actually mean
$l*j$, the multiplication of $l$ by $j$.
\begin{theorem}\label{correct4} Let 
$C^\perp = \langle E_i, E_{i+j}, E_{i+2j},
 \ldots, E_{i+(2t-1)j}\rangle$.  
 The dual code of $C^\perp$ is $C$.
Define $U=\langle E_{i}, E_{i+j}, E_{i+j}, \ldots, E_{i+tj}\rangle, V=
 \langle E_0, E_{1j}, E_{2j}, \ldots, E_{(t-1)j}\rangle$.  
 
Suppose $C^\perp, U,V$ generate mds codes.
 Then $(U,V)$ is a $t$-error correcting pair for $C$.
\end{theorem}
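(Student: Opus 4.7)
The plan is to mirror the argument of Theorem \ref{correct2} almost verbatim, with the role of consecutive indices $\{1,2,\ldots,2t\}$ replaced by the arithmetic progression $\{i, i+j, i+2j,\ldots, i+(2t-1)j\}$, and to verify the four conditions of Definition \ref{error} one by one. The underlying algebraic identity $E_a * E_b = E_{a+b}$ (condition \ref{eqs1}/\ref{eqs3}) does all the work; the mds hypotheses on $C^\perp, U, V$ then pin down the dimensions and distances.

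First I would check condition~1, that $U*V\subseteq C^\perp$. Since $U*V$ is spanned by the products of generators, and $E_{i+aj}*E_{bj} = E_{i+(a+b)j}$ by \ref{eqs3}, the exponents $a+b$ range over $\{0,1,\ldots,2t-1\}$ as $a$ runs over $\{0,\ldots,t\}$ and $b$ runs over $\{0,\ldots,t-1\}$. Hence every generator of $U*V$ lies in $\langle E_i, E_{i+j},\ldots, E_{i+(2t-1)j}\rangle = C^\perp$, as required.

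Next I would read off the parameters from the hypothesis that $C^\perp, U, V$ are mds. The code $U$ has $t+1$ generators, so it is an $(n,t+1,n-t)$ mds code; in particular $k(U)=t+1>t$, which is condition~2. The code $V$ has $t$ generators, so it is an $(n,t,n-t+1)$ mds code, and since a code is mds iff its dual is mds, $V^\perp$ is an $(n,n-t,t+1)$ mds code, giving $d(V^\perp)=t+1>t$ — condition~4. Finally, $C^\perp$ has dimension $2t$, so $C$ is an $(n,n-2t,2t+1)$ mds code, whence $d(U)+d(C) = (n-t)+(2t+1) = n+t+1 > n$, which is condition~3.

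The only subtlety, and the one place I would pause, is confirming that the generating sets of $U$, $V$ and $C^\perp$ are linearly independent and hence yield codes of the stated dimensions — but this is exactly what the standing mds hypothesis asserts, since an mds $(n,r)$ code by definition has dimension $r$. Once that is noted, the four conditions of Definition \ref{error} are all satisfied, and $(U,V)$ is a $t$-error correcting pair for $C$. There is no real obstacle beyond careful bookkeeping of the index range $a+b \in \{0,\ldots,2t-1\}$ in the $U*V$ computation, since this is where one might accidentally overshoot $2t-1$ and land outside $C^\perp$.
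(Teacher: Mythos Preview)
Your proof is correct and follows precisely the approach the paper intends: the paper states only that the proof of Theorem \ref{correct4} ``is similar to the proof of Theorem \ref{correct2}'', and you have written out exactly that parallel argument, verifying the four conditions of Definition \ref{error} from the identity $E_a*E_b=E_{a+b}$ and the mds hypotheses. Your careful check that $a+b$ ranges over $\{0,\ldots,2t-1\}$ is the one detail worth making explicit, and you have done so.
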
 

In a set-up there may be  more than one error-correcting pairs and it
may be useful to consider others. For example we could interchange 
some of the elements of $U,V$. 
\begin{theorem}\label{correct3} Let 
$C^\perp = \langle E_i, E_{i+j}, E_{i+2j},
 \ldots, E_{i+(2t-1)j}\rangle$.  
 The dual code of $C^\perp$ is $C$.
Suppose a  vector $E_{i-j}$ exists with
$E_{i-j}*E_j= E_i$.  
Define $U=\langle E_{i-j}, E_i, E_{i+j}, \ldots, E_{i+(t-1)j}\rangle, V=
 \langle E_{1j}, E_{2j}, \ldots, E_{tj}\rangle$.  
 
Suppose $C^\perp, U,V$ generate mds codes.
 Then $(U,V)$ is a $t$-error correcting pair for $C$.
\end{theorem}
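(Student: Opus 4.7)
The plan is to check the four defining properties of Definition \ref{error} in order, reading off the last three from the mds parameters of $U$, $V$, $C$ and using the extra hypothesis $E_{i-j}\per E_j=E_i$ to handle the one potentially delicate inclusion.

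First I would verify condition (1), $U\per V\subseteq C^\perp$. Since star multiplication is coordinatewise, hence commutative and associative, and since $U\per V$ is generated by star products of generators of $U$ with generators of $V$, it suffices to check each product $E_{i+kj}\per E_{lj}$ for $k\in\{-1,0,\ldots,t-1\}$ and $l\in\{1,\ldots,t\}$. When $k\ge 0$, condition \ref{eqs1}/\ref{eqs3} gives $E_{i+kj}\per E_{lj}=E_{i+(k+l)j}$, and $k+l$ ranges over $\{1,\ldots,2t-1\}$, so the result lies in $C^\perp$. For $k=-1$, I use the extra hypothesis $E_{i-j}\per E_j=E_i$ and associativity to get $E_{i-j}\per E_{lj}=E_{i-j}\per E_j\per E_{(l-1)j}=E_i\per E_{(l-1)j}=E_{i+(l-1)j}$, with index in $\{0,1,\ldots,t-1\}$, again inside $C^\perp$. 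This is the only step that is not purely automatic, and it is exactly the reason the hypothesis on $E_{i-j}$ was imposed.

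Next I would read off conditions (2), (3), (4) from the assumed mds parameters. Since $U$ is an mds code with $t+1$ generators, it is an $(n,t+1,n-t)$ code, and $V$ is an $(n,t,n-t+1)$ code. Because the dual of an mds code is mds, $V^\perp$ is $(n,n-t,t+1)$, and from the fact that $C^\perp$ is an mds $(n,2t,n-2t+1)$ code we get that $C$ is $(n,n-2t,2t+1)$. Hence $k(U)=t+1>t$, $d(U)+d(C)=(n-t)+(2t+1)=n+t+1>n$, and $d(V^\perp)=t+1>t$. All four conditions of Definition \ref{error} are met, so $(U,V)$ is a $t$-error correcting pair for $C$.

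The only substantive step is condition (1), and within it the only non-routine piece is the case $k=-1$; the rest is parameter counting that parallels the proof of Theorem \ref{correct2}. Because $E_{i-j}$ need not in general be a row of $A$, this is also precisely the place where the extra standing hypothesis on the existence of such a vector is used.
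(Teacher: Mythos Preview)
Your proof is correct and follows the same approach the paper uses for Theorem \ref{correct2} (which the paper implicitly intends to carry over to Theorems \ref{correct4} and \ref{correct3}): verify $U\per V\subseteq C^\perp$ on generators via $E_a\per E_b=E_{a+b}$, then read off the three numerical conditions from the mds parameters. Your explicit handling of the $k=-1$ case, using associativity together with the hypothesis $E_{i-j}\per E_j=E_i$, is exactly the extra detail this theorem requires beyond the template of Theorem \ref{correct2}, and you have it right.
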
 

It is thus noted that there may exist a number of different error-correcting
pairs for the same code.

\begin{example} Let $A$ have rows $E_i$ with $E_i*E_j=E_{i+j}$. 
Denote $E_i$ by $i$ and thus $E_i*E_j=E_{i+j}$ translates to $i*j=i+j$. 
Let $C^\perp = \langle{5,7,9,11,13,15\rangle}$ so that $C$ is $3$-error 
correcting (when $C^\perp$ is mds).
The following are $3$-error-correcting pairs.
\begin{itemize} \item $U=\langle 5,7,9,11\rangle, V= \langle
 0,2,4\rangle$. \item $U=\langle 3,5,7,9\rangle, V=\langle 2,4,6\rangle$. \item
$U = \langle 1,3,5,7\rangle,  V=\langle 4,6,8\rangle $.

\item When $-i$ exist (as for the Fourier matrix) it's clear that
      further error-correcting pairs for $C$ can easily be found. 

\end{itemize}   
\end{example}
 \subsection{Interpretation}
Consider $C^\perp, U, V$ as in Theorem \ref{correct4}. 
Now apply Algorithm \ref{pell} (derived from
\cite{pell1}) using the error-correcting pairs found in Theorem
\ref{correct4}. (Other correcting pairs, as shown can exist, 
 may also be used.) We show that the error 
locations may be obtained from the matrix given in the following
Theorem relative to the
bases $\{E_{i},E_{i+j}, \ldots, E_{i+tj}\}$ for $U$ and $\{\om_0,\om_1, \ldots,
\om_{t-1}\}$ for $V^\vee$, where $\om_i: E_{kj}\mapsto \de_{ik}$ for
$i=0,2,\ldots,t-1$.  
Write $F_k=E_{i+(k-1)j}$ for $k=1,2,\ldots, 2t$. Thus $U$ has basis
$\{F_1, F_2, \ldots, F_{t+1}\}$ and $C^\perp$ has basis $\{F_1,F_2, \ldots,
F_{2t}\}$. 
Let $\al_s= <w,F_{s}> = F_sw = E_{i+(s-1)j}w$ for $s=1,\ldots,2t$ and
these are known. 

Recall, definition \ref{maps}, 
 that $E_w:U \rightarrow V^\vee, u \mapsto (v\mapsto <w,u*v>)$.

\begin{theorem}\label{hank}  

$E_w$ has  the following matrix relative to the basis $\{F_1,F_{2},
  \ldots, F_{t+1}\}$ for $U$ and the basis $\{\om_1,\om_2, \ldots, 
\om_t\}$ for $V^\vee$, where $\om_i: E_{kj}\mapsto \de_{ik}$.

$\begin{pmatrix}\al_1 & \al_2 &\ldots & \al_{t+1} \\ \al_2&\al_3 &\ldots &
  \al_{t+2} \\ \al_3& \al_4 & \ldots & \al_{t+3} \\ \vdots & \vdots &
  \vdots &\vdots \\ \al_t & \al_{t+1} & \ldots & \al_{2t} \end{pmatrix}$
\end{theorem}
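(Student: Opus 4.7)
The plan is to compute $E_w$ on the basis vectors $F_k$ of $U$ and evaluate the resulting linear functionals on the dual basis vectors of $V$; the Hankel pattern then emerges immediately from the multiplicative identity $E_a\per E_b = E_{a+b}$ assumed in condition \ref{eqs1}/\ref{eqs3}.

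First, I would fix notation for the basis of $V$. Set $G_r = E_{(r-1)j}$ for $r=1,\ldots,t$, so that $\{G_1,\ldots,G_t\}$ spans $V$ and $\{\omega_1,\ldots,\omega_t\}$ is its dual basis, i.e.\ $\omega_r(G_s)=\delta_{rs}$. (This is exactly the reindexing of the $\omega_i$ given in the theorem; $G_1 = E_0$ plays the role of the vector of all $1$'s introduced in Section \ref{solve}.) Since the matrix of a linear map $E_w : U \to V^\vee$ relative to bases $\{F_k\}$ and $\{\omega_r\}$ is obtained by expressing each $E_w(F_k)$ in the dual basis, and the coefficient of $\omega_r$ in $E_w(F_k)$ is precisely $E_w(F_k)(G_r)$, it suffices to evaluate these scalars.

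The core computation is a one-liner. By the definition of $E_w$,
\[
  E_w(F_k)(G_r) \;=\; \langle w,\; F_k \per G_r\rangle.
\]
Using $F_k = E_{i+(k-1)j}$ and $G_r = E_{(r-1)j}$ together with the property $E_a\per E_b = E_{a+b}$, we get
\[
  F_k \per G_r \;=\; E_{i+(k-1)j}\per E_{(r-1)j} \;=\; E_{i+(k+r-2)j} \;=\; F_{k+r-1}.
\]
Hence $E_w(F_k)(G_r) = \langle w, F_{k+r-1}\rangle = \alpha_{k+r-1}$. Reading off the $(r,k)$-entry of the matrix gives $\alpha_{k+r-1}$, which is exactly the Hankel matrix displayed in the statement (column $k$ is $(\alpha_k,\alpha_{k+1},\ldots,\alpha_{k+t-1})^\T$, so the first row is $\alpha_1,\ldots,\alpha_{t+1}$, the second row is $\alpha_2,\ldots,\alpha_{t+2}$, and so on).

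There is essentially no obstacle: the only subtle point is the bookkeeping on indices, in particular making sure that every product $F_k\per G_r$ with $1\le k\le t+1$ and $1\le r\le t$ gives an exponent $i+(k+r-2)j$ with $k+r-1\le 2t$, so that $\alpha_{k+r-1}$ is one of the \emph{known} measurements and the identity $E_a\per E_b = E_{a+b}$ applies in the range where it is guaranteed by \ref{eqs1} (or holds unconditionally under \ref{eqs3}). Since $k+r-1$ ranges over $1,\ldots,2t$, this is precisely the range of indices for which $\alpha_s$ is defined, so the theorem follows.
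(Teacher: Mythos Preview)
Your proposal is correct and follows essentially the same approach as the paper: both compute $E_w(F_k)$ on the basis vectors of $V$ using the identity $E_a\per E_b=E_{a+b}$ to reduce $\langle w,F_k\per E_{(r-1)j}\rangle$ to $\langle w,F_{k+r-1}\rangle=\alpha_{k+r-1}$. The paper works out $F_1$ explicitly before stating the general pattern, whereas you compute the general $(r,k)$-entry directly, but the argument is identical.
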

\begin{proof}

Now $E_w: U \rightarrow V^\vee, u \mapsto (v\mapsto <w,u*v>)$.

Thus $E_w$ works as following on $F_1$:

 $F_1 \mapsto \begin{pmatrix} E_0 \mapsto <w,
F_1*E_0> \\ E_{j} \mapsto <w, F_1*E_{j}> \\ \vdots \\ E_{(t-1)j}\mapsto 
<w,F_1*E_{(t-1)j}>
\end{pmatrix}=\begin{pmatrix} E_0 \mapsto <w,
F_1> \\ E_{j} \mapsto <w, F_2> \\ \vdots \\ E_{(t-1)j}\mapsto <w,F_t>
\end{pmatrix} = \begin{pmatrix} E_0\mapsto \al_1 \\ E_{j}\mapsto \al_2 \\
		 \vdots \\ E_{(t-1)j}\mapsto \al_t\end{pmatrix}$.

Thus $E_w: F_1\mapsto \al_1\om_1+ \al_2\om_2+ \ldots +\al_t\om_t$.
Similarly $E_w: F_i\mapsto \al_{i}\om_1+ \al_{i+1}\om_2 + \ldots
+\al_{i+t-1}\om_t$.

Hence $E_w: (F_1,F_2,\ldots, F_{t+1}) \mapsto (w_1,w_2, \ldots, w_t)\begin{pmatrix} \al_1 &\al_2 & \ldots & \al_{t+1} \\ \al_2 & \al_3 &
  \ldots & \al_{t+2} \\ \vdots & \vdots & \vdots & \vdots \\ \al_t &
  \al_{t+1} & \ldots & \al_{2t}\end{pmatrix}$

Thus the matrix of $E_w$ relative  to bases $\{F_1, F_2, \ldots,
F_{t+1}\}$ for $U$ and
the basis $\{\om_1, \om_2, \ldots, \om_t\}$ for $V^\vee$ is
$\begin{pmatrix} \al_1 &\al_2 & \ldots & \al_{t+1} \\ \al_2 & \al_3 &
  \ldots & \al_{t+2} \\ \vdots & \vdots & \vdots & \vdots \\ \al_t &
  \al_{t+1} & \ldots & \al_{2t}\end{pmatrix}$.

\end{proof}

The matrix in Theorem \ref{hank} is a Hankel matrix and its kernel in general 
can be obtained in at most $O(t^2)$ operations. Just any non-zero
element of the kernel is required.  

It is then required to multiply a non-zero  element of
the kernel of the matrix  by
$(F_1,F_2,\ldots, F_{t+1})$ to get an actual kernel element of the mapping
$E_w$.
 
Suppose then
such an element $a\in \ker E_w$ has been found. Let $J=z(a)= \{j | 
a_j=0\}$  which
is the set of locations of the zero coordinates of $a$. It is now
required to compute the space of solutions of $s_J(x)=s(w)$. Suppose 
$J=\{i_1,i_2, \ldots, i_t\}$ and  let $x\in F^n$. 
Then $s_J(x) = s*i_J(x)$. Let $i_J(x)=y$ and suppose now $y=i_J(x)$ is $x_1$ in $i_1$ position,
$x_2$ in $i_2$ position and in general $x_k$ in $i_k$ position and
zeros elsewhere.

Now $s: F^n \rightarrow (C^\perp)^\vee$ is $u \mapsto (v\mapsto
<v,u>)$. A basis for $C^\perp$ is $\{F_1,F_2,\ldots, F_{2t}\}$. 

Hence 
$s: w \mapsto \begin{pmatrix} F_1 \mapsto <F_1,w> &= \al_1
	      \\ F_2 \mapsto <F_2,w> &= \al_2
 \\ \vdots &\vdots \\ 
F_{2t} \mapsto <F_{2t},w> &=\al_{2t} \\

\end{pmatrix}$. 


Since $F_i\in F^n$, let  
 $F_i=(F_{i,1}, F_{i,2}, \ldots, F_{i,n})$ for $i=1,2,\ldots,2t$. 

Now 

$s_J(x)=s*i_J(x)$ and so:

$$s_J: x \mapsto \begin{pmatrix} F_1 \mapsto <F_1,y> &= &x_1F_{1,j_1}+
		  x_2F_{1,j_2}+
 \ldots + x_tF_{1,j_t} \\ F_2 \mapsto <F_2,y> &=
 &x_1F_{2,j_1}+ x_2F_{2,j_2}+ 
 \ldots + x_tF_{2,j_t}\\ \vdots &\vdots &\vdots \\ 
F_{2t} \mapsto <F_{2t},y> &=
 &x_1F_{2t,j_1}+ x_2F_{2t,j_2}+ 
 \ldots + x_tF_{2t,j_t} \\

\end{pmatrix}$$. 

Hence solving $s_J(x) = s(w)$ reduces to solving the following:
\begin{eqnarray}\label{est}
\begin{pmatrix}F_{1,j_1} & F_{1,j_2} & \ldots &F_{1,j_t}\\ F_{2,j_1} &
 F_{2,j_2} & \ldots &F_{2,j_t} \\ \vdots & \vdots & \vdots &\vdots \\ 
F_{2t,j_1} & F_{2t,j_2} & \ldots
&F_{2t,j_t}\end{pmatrix}\begin{pmatrix}x_1 \\ x_2 \\ \vdots \\ x_t
			\end{pmatrix}
= \begin{pmatrix} \al_1 \\ \al_2 \\ \vdots \\ \al_{2t} \end{pmatrix}   
\end{eqnarray}
The value of $w$ is then the solution of these equations with entries in
appropriate places as determined by $J$. The values of $F_{i,k}$ are
known and in some cases have nice forms. The matrix in
(\ref{est}) can be of a special type (for example, submatrix of
Vandermonde and/or
consisting of roots of unity) enabling practical (easier) calculation of a
solution to equations (\ref{est}). 

\section{Algorithms}\label{algol} Now algorithms are given based on the
results of Section \ref{solve} with which to solve the underdetermined
systems in various cases.  
Suppose $y=Ax$ where 
 $A$ is an 
$n\ti n$, $w$ an $n\ti 1$ unknown vector and 
where $u$ entries of $y$ are known. It is given that 
$w$ has at most $t$ non-zero entries. 
Denote the rows of $A$ by $\{E_0,E_1,\ldots, E_{n-1}\}$ and suppose that
$E_i*E_j = E_{i+j}$. 

 
Measurements $E_jw$ (values of $y$) are taken or known for $j \in M= \{j_1, j_2,
\ldots, j_u\}\subset \{0,1,\ldots,(n-1)\}$  where $u\geq 2t$. Suppose the measurements satisfy the
conditions of Theorem \ref{correct4} with  
$C^\perp = \langle E_i, E_{i+j}, E_{i+2j},
 \ldots, E_{i+(2t-1)j}\rangle$  
 and
 $C$ is the dual code of the code generated by $C^\perp$.
 We give  an Algorithm to calculate the value of $w$ under these
 conditions  when the
measurements are in an arithmetic progression (evenly distributed) and
subject to conditions of Section \ref{solve}.

\subsection{Case $k=1$}
We first for clarity give the algorithm when $M=\{1,2,\ldots, u\}$ and
$u=2t$. This
is easier to explain and avoids the complicated notation necessary for
the general case given below. 

The set-up then is that $A$ is an $n\times n$ matrix with rows
$\{E_0,E_1,\ldots, E_{n-1}\}$ and that
measurements $E_iw$ are taken for $i=1,2,\ldots, 2t$. It is assumed
that $w$ has at most $t$ non-zero entries. 
Then $w$ is  determined as follows:
Let $\al_i=<w,E_i> = E_iw$ for $i\in J=\{1,2,\ldots,2t\}$. 

\begin{Algorithm}\label{algor}

\quad
\begin{itemize}

\item Find a non-zero  element $x\T$ of the kernel of 
$E=\begin{pmatrix} \al_1 &\al_2 & \ldots & \al_{t+1} \\ \al_2 & \al_3 &
  \ldots & \al_{t+2} \\ \vdots & \vdots & \vdots & \vdots \\ \al_t &
  \al_{t+1} & \ldots & \al_{2t}\end{pmatrix}$.
\item Let $a= (E_1,E_2,\ldots, E_{t+1})x\T$. (Any non-zero multiple of $a$
      will suffice as we are only interested  in the zero entries of
      $a$. Note that $a$ is a $1\times n$ vector.)
\item Let $z(a)= \{j | 
a_j=0\}$  which is the set of locations of the zero coordinates of
$a$. Suppose $z(a)=\{j_1,j_2, \ldots, j_t\}$ and denote this set by $J$.
\item Solve $s_J(x) = s(w)$. This reduces to solving the following.
Here $E_i=(E_{i,1}, E_{i,2}, \ldots, E_{i,n})$.
\begin{eqnarray}\label{est2}
\begin{pmatrix}E_{1,j_1} & E_{1,j_2} & \ldots &E_{1,j_t}\\ E_{2,j_1} &
 E_{2,j_2} & \ldots &E_{2,j_t} \\ \vdots & \vdots & \vdots &\vdots \\ 
E_{2t,j_1} & E_{2t,j_2} & \ldots
&E_{2t,j_t}\end{pmatrix}\begin{pmatrix}x_1 \\ x_2 \\ \vdots \\ x_t
			\end{pmatrix}
= \begin{pmatrix} \al_1 \\ \al_2 \\ \vdots \\ \al_{2t} \end{pmatrix}   
\end{eqnarray}


\item 
The value of $w$ is then the solution of these equations with entries in
appropriate places as determined by $J$.
\end{itemize}
 
\end{Algorithm}
The complexity of the operations is discussed in Section \ref{general}.

\subsection{General case}\label{general} 
Suppose 
 $A$ is an 
$n\ti n$   
 matrix with rows $\{E_0,E_1,\ldots, E_{n-1}\}$ satisfying $E_i*E_j =
 E_{i+j}$.\footnote{More generally it is sometimes enough that $E_i*E_j = \al E_{i+j}$
 for some scalar $\al$ but this is not considered here.} 

Measurements $E_jw$ are taken or known for $j \in J= \{j_1, j_2,
\ldots, j_u\}$  where $u\geq 2t$. The elements in $J$ are in arithmetic
progression with difference $k$ so that the satisfying $\gcd(n,k)=1$. 
Then $w$ is calculated by the following algorithm.

Let $\al_k = <w,F_{j_k}> = F_{j_k}w$ for $j_k\in J$.   
Define $F_i=E_{j_i}$ for $j_i \in J$ 
and $F_0=E_{j_1-k}$ with indices taken$\mod n$.
Let $F_i=(F_{i,1}, F_{i,2},\ldots, F_{i,n})$.  

\begin{Algorithm}\label{algor1}

\quad
\begin{itemize}

\item Find a non-zero  element $x\T$ of the kernel of 
$E=\begin{pmatrix} \al_1 &\al_2 & \ldots & \al_{t+1} \\ \al_2 & \al_3 &
  \ldots & \al_{t+2} \\ \vdots & \vdots & \vdots & \vdots \\ \al_t &
  \al_{t+1} & \ldots & \al_{2t}\end{pmatrix}$.
\item Let $a= (F_0,F_1,\ldots, F_{t})x\T$. (Any non-zero multiple of $a$
      will suffice as we are only interested  in the zero entries of
      $a$. Note that $a$ is a $1\times n$ vector.)
\item Let $z(a)= \{j | 
a_j=0\}$  which is the set of locations of the zero coordinates of
$a$. Suppose $z(a)=\{j_1,i_2, \ldots, j_t\}$ and denote this set by $J$.
\item Solve $s_J(x) = s(w)$. This reduces to solving the following:
\begin{eqnarray}\label{est1}
\begin{pmatrix}F_{1,j_1} & F_{1,j_2} & \ldots &F_{1,j_t}\\ F_{2,j_1} &
 F_{2,j_2} & \ldots &F_{2,j_t} \\ \vdots & \vdots & \vdots &\vdots \\ 
F_{2t,j_1} & F_{2t,j_2} & \ldots
&F_{2t,j_t}\end{pmatrix}\begin{pmatrix}x_1 \\ x_2 \\ \vdots \\ x_t
			\end{pmatrix}
= \begin{pmatrix} \al_1 \\ \al_2 \\ \vdots \\ \al_{2t} \end{pmatrix}   
\end{eqnarray}


\item 
The value of $w$ is then the solution of these equations with entries in
appropriate places as determined by $J$.
\end{itemize}
 
\end{Algorithm}

\section{Random selection}\label{random} 
This section initiates a method for working with randomly chosen 
 error-correcting pairs. It is independent of subsequent  sections. 

Suppose the  $n\times n$ matrix $A$
in the underdetermined system $Aw=y$ has the property that
the determinant of any square submatrix is non-zero. Then the choice of any
$r$ rows of $A$ yields an mds $(n,r,n-r+1)$  code. Matrices which have
this property are the Fourier $n\times n$ matrices with $n$ a prime
(Chebotar\"ev's theorem), the Vandermonde real matrices with positive
entries and Cauchy matrices.  

When considering $Aw=y$, if any $r$ rows of $A$ are chosen
for $C^\perp$  (notation as in
Section \ref{rows}) then an mds code for $\C$ is obtained   
but we haven't an error-correcting pair
to hand as when the rows are evenly distributed. Now approach the
randomness from another point of 
view of {\em  choose the error-correcting pair randomly} and this decides the
rows to be chosen for the measurements (code); 
then the randomly chosen pair is an error-correcting pair for this code. 

This section enables working with rows of matrices which have the
property that the determinant of any square submatrix is non-zero
 as the `samples' for $Aw$. However the systems in
general may require more than $2t$ samples when the $w$ has just $t$
non-zero entries. 

Consider then the following Proposition of  Duursma and K\"otter 
\cite{koetter}.

(For $U,V\in F^n$ let $U*V$ denote the space generated by $\{u*v |u\in
U, v\in V\}$.)

\begin{proposition}\label{duursma} {\em (See Proposition 1 of \cite{koetter}.)}
Let $U,V$ be mds codes with $k(U)=t+1, k(V)=t$. Any code $C\perp U*V$
 has distance $\geq 2t+1$ and has $t$-error correcting pair $(U,V)$.
\end{proposition}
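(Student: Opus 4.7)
The statement is the conjunction of two claims: (a) $d(C)\ge 2t+1$, and (b) the pair $(U,V)$ satisfies the four conditions of Definition \ref{error}. The plan is to prove (a) first, after which (b) reduces to a bookkeeping check. Condition (1) is immediate from the hypothesis $C\perp U*V$; condition (2) from $k(U)=t+1>t$; for (3) we combine $d(U)=n-t$ (since $U$ is $[n,t+1,n-t]$ MDS) with (a) to get $d(U)+d(C)\ge n+t+1>n$; for (4) we use that the dual of an MDS code is MDS, so $V^\perp$ has parameters $[n,n-t,t+1]$ and $d(V^\perp)=t+1>t$.

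For (a) I would argue by contradiction. Suppose $c\in C$ is nonzero with weight $s\le 2t$, and set $I=\text{supp}(c)$, so $|I|=s$ and $c_i\neq 0$ for every $i\in I$. Rewriting $c\perp U*V$ as $(c*u)\cdot v=0$ for all $u\in U$, $v\in V$ shows that $c*u\in V^\perp$ for every $u\in U$; since $c*u$ is supported on $I$, this is equivalent to the statement that $c|_I * u|_I$ lies in $(V|_I)^\perp$ inside $F^I$ for every $u\in U$.

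The crux is a dimension count exploiting the MDS hypotheses. The MDS property of $U$ gives $\dim U|_I=\min(s,t+1)$ (surjective projection for $s\le t+1$, injective projection for $s\ge t+1$), and likewise $\dim V|_I=\min(s,t)$, hence $\dim(V|_I)^\perp=\max(s-t,0)$. Because every entry of $c|_I$ is nonzero, componentwise multiplication by $c|_I$ is a linear bijection of $F^I$, so the image $\{c|_I * u|_I : u\in U\}\subseteq (V|_I)^\perp$ has dimension $\min(s,t+1)$. The containment yields the numerical inequality $\min(s,t+1)\le \max(s-t,0)$, which fails in every case $1\le s\le 2t$ (the left is $\ge 1$ while the right is $0$ for $s\le t$, and the left is $t+1$ while the right is $s-t\le t$ for $t+1\le s\le 2t$). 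This contradiction gives $d(C)\ge 2t+1$.

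The main obstacle is the middle range $t+1\le s\le 2t$, where $(V|_I)^\perp$ is no longer trivial and $U|_I$ is no longer all of $F^I$. Here one must use the MDS property of $U$ and $V$ in tandem: the MDS property of $U$ pins the image dimension at $t+1$ rather than at $s-t$, while the MDS property of $V$ caps $\dim(V|_I)^\perp$ at $s-t$, and the gap of one between $t+1$ and $s-t\le t$ is exactly what forces $c=0$.
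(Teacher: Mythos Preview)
Your argument is correct. The restriction step, the dimension computations $\dim U|_I=\min(s,t+1)$ and $\dim(V|_I)^\perp=\max(s-t,0)$ (both consequences of the MDS hypothesis), and the case split $1\le s\le t$ versus $t+1\le s\le 2t$ are all clean, and the verification of the four conditions in Definition~\ref{error} is routine once (a) is established.

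There is nothing to compare against in this paper: the proposition is quoted without proof, with a pointer to Proposition~1 of \cite{koetter}. Your projection-and-dimension-count argument is in fact the standard one used in the Pellikaan/Duursma--K\"otter framework, so you have reconstructed essentially the intended proof. One small remark: for condition~(3) alone you only need $d(C)>t$, not the full $d(C)\ge 2t+1$; but since the proposition asserts the stronger bound as a separate conclusion, proving it first as you do is the natural order.
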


\subsubsection{Illustrative examples of random selection}
The examples given necessarily have small length so they can be
displayed but in general large length examples are easily obtained.  

\begin{itemize}\item {Example 1:} Let $n=19$ and let 
$A$ be the $19\times 19$ Fourier
matrix with rows $\{E_0,E_1,\ldots, E_{18}\}$. As $19$ is prime any choice
of rows of $A$ gives an mds code. We now manufacture a $3$-error 
correcting code ($t=3$) with $3$-error correcting pair.
Then randomly choose $4$ and $3$ rows of $A$. Suppose then
 $U=\langle E_1, E_3,E_6,E_{10}\rangle, V = \langle E_0,E_5,E_8 \rangle
 $. Then $U*V=\langle E_1,E_3,E_6,E_8, E_9,E_{10},E_{11},E_{14},E_{15},
 E_{18}\rangle$ and let
$C^\perp = U*V$. Then $C$ is a code with distance $\geq 2t+1=7$. Actually
$C^\perp$ is an $(19,10,9)$ code and $C$ is an $(19,9,11)$ code. So in
fact the code $C$ is a $5$-error correcting code but we just have a
$3$-error correcting pair.  
\item Example 2.

Now let $A$ be as in Example 1. 
 Here we produce a $5$-error correcting pair by choosing 
randomly $U=\langle E_1,E_3,E_6,E_{10},E_{18}\rangle$ and then choosing $V$ to
be $4$ of these say $V=\langle E_1,E_3,E_6,E_{10}\rangle$. Then let
      $C^\perp= U*V$.
Now $U*V$ has $13$ elements and so $C^\perp$ is a $(19,13,6)$ code and
$C$ is a $(19,6,14)$ code. Thus $C$ is a $6$ error correcting code and
we have a $5$ error correcting pair for it.
\item Consider the Cauchy (which is Hilbert) 
matrix $A=\begin{pmatrix} 1& 1/2& 1/3 & 1/4 &
				     \ldots \\ 1/2 & 1/3 & 1/4 & 1/5 &
	   \ldots
\\ 1/3&1/4&1/5& 1/6 & \ldots \\ \vdots & \vdots & \vdots & \vdots &
	   \vdots \\ \end{pmatrix}$

Denote the rows of $A$ by $E_1, E_2, \ldots$. Suppose we want a $t$
      error correcting code. Let $U=\langle E_1,E_2,E_3\rangle, V
      =\langle E_1,E_2\rangle$. Then $U*V= \langle E_1*E_1,
      E_2*E_2,E_3*E_1, E_2*E_2,E_2*E_3\rangle$ and let $C^\perp = U*V$. 

Then 

$C^\perp = \langle (1,1/4,1/9,1/16, \ldots),
      (1/2,1/6,1/12,1/20,\ldots), \\ 
      (1/3,1/8,1/15, 1/24,\ldots),(1/4,1/9,1/16,
      1/25,\ldots),(1/6,1/12,1/20,1/30, \ldots)\rangle$

is the required code with which to take the `samples'. Now $C$ is an
      $(n,n-5)$ code (provided the elements in $C^\perp$ are
      independent) and is $2$-error correcting with error locating pair
      $(U,V)$; now $C$ may be a $(n,n-5,6)$ code but is by the theory
       a $(n,n-5,\geq 5)$ code. If the elements of $C^\perp$ are not
      independent then $C$ is an $(n,n-4,5)$ code. 
\end{itemize}

\subsection{Method} 
Suppose now $A$ is a matrix such that any square submatrix has non-zero
determinant.

Now choose at random any $t+1$ rows of $A$ to form $U$ and then any
$t$ rows of $A$ to form $V$. Then let $C^\perp$ be the space generated
by $\{u*v | u\in U, v\in V\}$. From this it is deduced that $d(C)\geq
2t+1$ and $C$ has $t$-error correcting pair $(U,V)$. Then proceed
as before in Section \ref{solve} to produce the decoding algorithm
with the $t$-error correcting pair with which to solve $Aw=y$ where $w$
has at most $t$ non-zero entries and $E_jw$ are known for $E_j\in C^\perp$. 

We don't need the multiplicative property $E_i*E_j= E_{i+j}$ on the
rows of $A$ although $U*V$ could be large; the largest rank that $U*V$
could have is $t(t+1)$ but can often be made of a smaller order.   
However selections can be made so that the resulting code has
dimension of $O(t)$. 
This for example by choosing
the rows in $U$, $|U|=t+1$, and in $V$, $|V|=t$ 
to be in arithmetic sequence with the same difference will give $C^\perp
=U*V$ with $2t$ elements; variations of the differences will also give
$|U*V|= st$ for very small $s$ (compared to $t$). 

Being able to randomly choose the error-correcting pairs and thus the
measurements $C^\perp$ suggests that encryption methods may possibly 
be introduced into the system.

Thus: \begin{enumerate}
\item  In $Aw$ it is given that $w$ has at most $t$ non-zero
entries and that  the determinant of any square
submatrix of $A$ is non-zero.
\item Choose $t+1$ rows of $A$ to form $U$ and then $t$ rows of $A$ to
  form $V$.
\item Let $C^\perp = U*V$. Then $C$ has distance $\geq 2t+1$ and
  $(U,V)$ is a $t$-error correcting pair for $C$. 
\item The measurements/samples $E_jw$ are taken for $E_j$ in a
  generating set of $C^\perp$.
\item The value of $w$ is then determined by the decoding methods of Section
  \ref{algol}. Details are omitted.
\end{enumerate}

\section{Determinants of Submatrices}
The Vandermonde matrix $V=V(x_1,x_2,\ldots,x_n)$ is defined by
   
$V=V(x_1,x_2,\ldots,x_n) = \begin{pmatrix}
1&1&\ldots &1 \\ x_1& x_2& \ldots& x_n \\ \vdots & \vdots &
\vdots & \vdots \\ x_1^{n-1} & x_2^{n-1} & \ldots &
x_n^{n-1} \end{pmatrix}$ 

It is well-known that the determinant of $V$ is non-zero if and only
if the $x_i$ are 
distinct. Assume the $x_i$ are non-zero. 

\begin{proposition} Let $V=V(x_1,x_2,\ldots, x_n)$ be a Vandermonde
 matrix with  rows and columns numbered $\{0, 1, \ldots, n-1\}$. 
 Suppose rows $\{i_1,i_2,\ldots,i_s\}$ and columns
 $\{j_1,j_2,\ldots, j_s\}$ are chosen to form an $s\times s$ submatrix $S$
 of $V$ and that $\{i_1,i_2, \ldots, i_s\}$ are in arithmetic
 progression with arithmetic difference $k$. Then

$$|S|= x_{k_1}^{i_1}x_{k_2}^{i_1} \ldots
 x_{k_s}^{i_1}|V(x_{k_1}^k,x_{k_2}^k, \ldots, x_{k_s}^k)|$$
\end{proposition}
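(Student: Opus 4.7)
The plan is to simply compute the entries of $S$ explicitly, factor appropriate common terms out of each column, and recognize what remains as a Vandermonde determinant. The statement writes $x_{k_\ell}$ in the conclusion where the columns were labeled $\{j_1,\ldots,j_s\}$; I will treat the $k_\ell$ as $j_\ell$ (clear typo) throughout.

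First I would write the entry of the submatrix $S$ in row $\ell$ and column $m$. Since row $i_\ell$ and column $j_m$ of the full Vandermonde $V$ contains $x_{j_m}^{i_\ell}$, and since $i_\ell = i_1 + (\ell-1)k$ by the arithmetic progression hypothesis, we have
\[
S_{\ell m} = x_{j_m}^{i_1 + (\ell-1)k} = x_{j_m}^{i_1} \cdot \bigl(x_{j_m}^{k}\bigr)^{\ell-1}.
\]
This is the key algebraic observation: the exponent splits into a base piece $i_1$ (depending only on $m$) and an increment piece $(\ell-1)k$ which, after absorbing $k$, makes the row index $\ell$ act as a pure Vandermonde exponent in the new variables $x_{j_m}^k$.

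Next I would pull the factor $x_{j_m}^{i_1}$ out of column $m$ of $S$ for each $m=1,\ldots,s$; multilinearity of the determinant in the columns gives
\[
|S| \;=\; \Bigl(\prod_{m=1}^{s} x_{j_m}^{i_1}\Bigr)\cdot |M|,
\]
where $M$ is the $s\times s$ matrix with entries $M_{\ell m} = (x_{j_m}^k)^{\ell-1}$. But $M$ is exactly the Vandermonde matrix $V(x_{j_1}^k, x_{j_2}^k, \ldots, x_{j_s}^k)$ in the variables $y_m = x_{j_m}^k$, so substituting back yields the claimed formula.

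There is no real obstacle here beyond bookkeeping; the only subtlety is being careful that the rows of $V$ are indexed starting at $0$ (so that the exponent in row $i_\ell$ really is $i_\ell$ and not $i_\ell - 1$), which is precisely what allows the factorisation $x_{j_m}^{i_1+(\ell-1)k} = x_{j_m}^{i_1}(x_{j_m}^k)^{\ell-1}$ to land on the standard Vandermonde shape. The hypothesis $x_i \neq 0$ is used only implicitly, to guarantee that the pulled-out factors $x_{j_m}^{i_1}$ are well-defined when $i_1 = 0$ causes no issue and when $i_1 > 0$ are nonzero, which will be relevant later for applying this identity to deduce non-vanishing of $|S|$.
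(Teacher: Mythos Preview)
Your proof is correct and follows essentially the same route as the paper: write the $(\ell,m)$ entry as $x_{j_m}^{i_1+(\ell-1)k}$, factor $x_{j_m}^{i_1}$ out of each column, and identify the remaining matrix as the Vandermonde $V(x_{j_1}^k,\ldots,x_{j_s}^k)$. You also correctly flagged the $k_\ell$/$j_\ell$ notational slip in the statement.
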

\begin{proof} Note that $i_{l+1}-i_l= k$ for $l=1,2,\ldots, s-1$, for
 $k$ the fixed arithmetic difference.

Now $ S=\begin{pmatrix}x_{k_1}^{i_1} & x_{k_2}^{i_1} & \ldots &
       x_{k_s}^{i_1} \\ x_{k_1}^{i_2} & x_{k_2}^{i_2}&\ldots
       & x_{k_s}^{i_2}
\\ \vdots & \vdots & \vdots & \vdots \\ x_{k_1}^{i_s} & x_{k_2}^{i_s}&
       \ldots & x_{k_s}^{i_s} \end{pmatrix} $
 and so $|S| = \left|\begin{array}{cccc}x_{k_1}^{i_1} & x_{k_2}^{i_1} & \ldots &
       x_{k_s}^{i_1} \\ x_{k_1}^{i_2} & x_{k_2}^{i_2}&\ldots
       & x_{k_s}^{i_2}
\\ \vdots & \vdots & \vdots & \vdots \\ x_{k_1}^{i_s} & x_{k_2}^{i_s}&
       \ldots & x_{k_s}^{i_s} \end{array}\right| $.

Hence by factoring out $x_{k_i}$ from column $i$ for $i=1,2,\ldots, s$
it follows that  

$|S| = x_{k_1}^{i_1}x_{k_2}^{i_1}\ldots x_{k_s}^{i_1} \left|
 \begin{array}{cccc}1&1&\ldots &1 \\ x_{k_1}^k & x_{k_2}^k & \ldots & x_{k_s}^k
\\ x_{k_1}^{2k} & x_{k_2}^{2k} & \ldots &x_{k_s}^{2k} \\ \vdots &
  \vdots & \vdots & \vdots \\ x_{k_1}^{(s-1)k} & x_{k_2}^{(s-1)k} &
  \ldots & x_{k_s}^{(s-1)k}\end{array}\right|
= x_{k_1}^{i_1}x_{k_2}^{i_2}\ldots x_{k_s}^{i_s}|V(x_{k_1}^k,x_{k_2}^k,
 \ldots, x_{k_s}^k)|$

\end{proof}

A similar result holds when the columns $\{j_1,j_2,\ldots,
j_s\}$ are in arithmetic progression. 
\begin{corollary} $|S|\neq 0$ if and only if $|V(x_{k_1}^k,x_{k_2}^k,
 \ldots, x_{k_s}^k)|\neq 0 $. 
\end{corollary}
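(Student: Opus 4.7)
The plan is to derive this as an immediate consequence of the preceding proposition, invoking the standing non-vanishing hypothesis on the $x_i$. The identity established there reads
\[
|S|= x_{k_1}^{i_1}x_{k_2}^{i_1}\cdots x_{k_s}^{i_1}\,|V(x_{k_1}^k,x_{k_2}^k,\ldots,x_{k_s}^k)|,
\]
so the corollary reduces to noting that the scalar prefactor is nonzero. Since the Vandermonde matrix was introduced under the explicit assumption that $x_i\neq 0$ for all $i$, each $x_{k_l}^{i_1}$ is a nonzero element of the field, and therefore so is the product $x_{k_1}^{i_1}x_{k_2}^{i_1}\cdots x_{k_s}^{i_1}$. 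Multiplying or dividing a field element by a nonzero scalar does not change whether it is zero, and this gives both directions of the equivalence simultaneously.

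Concretely, I would write: by the proposition, $|S|$ equals a nonzero scalar times $|V(x_{k_1}^k,\ldots,x_{k_s}^k)|$; hence $|S|=0$ iff $|V(x_{k_1}^k,\ldots,x_{k_s}^k)|=0$, proving the contrapositive of the claimed equivalence. There is no real obstacle here — the content is entirely carried by the proposition, and the corollary is the "non-vanishing repackaging" of that formula. The only mild caution is to make explicit the $x_i\neq 0$ hypothesis so the reader sees why the prefactor cannot vanish; this also foreshadows its later use (via the standard Vandermonde determinant formula applied to the $x_{k_l}^k$) to characterize when submatrices are singular in terms of $k^{\text{th}}$ roots of unity among the ratios $x_i/x_j$, which is the form needed in Section \ref{vander}.
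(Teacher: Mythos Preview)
Your proposal is correct and matches the paper's approach: the paper states the corollary without proof, treating it as immediate from the preceding proposition together with the standing assumption $x_i\neq 0$, which is exactly the argument you give. Your explicit mention of why the prefactor is nonzero is a welcome clarification that the paper leaves implicit.
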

\begin{corollary} $|S| \neq 0$ if and only if $x_{k_i}^k \neq x_{k_j}^k$
 for $i\neq j, 1\leq i,j \leq s$. This happens if and only if
 $(x_{k_i}{x_{k_j}^{-1}})^k \neq 1$ for $i\neq j, 1\leq i,j \leq s$. 
\end{corollary}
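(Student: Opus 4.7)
The plan is to chain two elementary equivalences. First, by the preceding corollary, $|S|\neq 0$ is equivalent to $|V(x_{k_1}^k,x_{k_2}^k,\ldots,x_{k_s}^k)|\neq 0$, so the problem reduces to a statement about an ordinary Vandermonde determinant in the arguments $y_i := x_{k_i}^k$.

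Next, I would invoke the standard closed form for the Vandermonde determinant, $|V(y_1,\ldots,y_s)| = \prod_{i<j}(y_j - y_i)$, already cited at the beginning of this section as the condition ``the determinant of $V$ is non-zero if and only if the $x_i$ are distinct''. Applying this to $y_i = x_{k_i}^k$ immediately yields that $|V(x_{k_1}^k,\ldots,x_{k_s}^k)|\neq 0$ if and only if $x_{k_i}^k \neq x_{k_j}^k$ for all $i\neq j$, $1\leq i,j\leq s$. This establishes the first ``if and only if'' of the corollary.

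For the second equivalence, I would use the running hypothesis that all $x_i$ are non-zero, so in particular each $x_{k_j}$ is invertible in the ground field. Then $x_{k_i}^k \neq x_{k_j}^k$ can be rewritten, by multiplying through by $x_{k_j}^{-k}$, as $x_{k_i}^k x_{k_j}^{-k} \neq 1$, i.e.\ $(x_{k_i} x_{k_j}^{-1})^k \neq 1$. Combining the two equivalences gives the statement. There is no real obstacle here; the only point to be a little careful about is invoking non-vanishing of the $x_i$ before inverting, which the standing assumption supplies.
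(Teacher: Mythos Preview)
Your proposal is correct and matches the paper's approach: the paper states this corollary without proof, treating it as immediate from the preceding corollary together with the standard Vandermonde distinctness criterion and the standing assumption that the $x_i$ are non-zero, which is precisely the chain of equivalences you spell out.
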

\begin{corollary}\label{unity}  $|S|\neq 0$ if and
 only if $(x_{k_i}x_{k_j}^{-1})$ is not a $k^{th}$ root of unity for
 $i\neq j, 1\leq i,j\leq s$.
\end{corollary}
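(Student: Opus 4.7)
The plan is to observe that this corollary is an immediate restatement of the preceding corollary, using only the definition of a $k^{th}$ root of unity. The preceding corollary already established that $|S|\neq 0$ if and only if $(x_{k_i}x_{k_j}^{-1})^k\neq 1$ for all $i\neq j$ with $1\leq i,j\leq s$, so nothing remains except to translate the condition $(x_{k_i}x_{k_j}^{-1})^k\neq 1$ into the language of roots of unity.

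First I would recall the definition: an element $\zeta$ of the ground field is a $k^{th}$ root of unity precisely when $\zeta^k=1$. Applying this to $\zeta=x_{k_i}x_{k_j}^{-1}$ (which is well-defined since we have assumed all $x_i\neq 0$), the condition $(x_{k_i}x_{k_j}^{-1})^k\neq 1$ is exactly the condition that $x_{k_i}x_{k_j}^{-1}$ fails to be a $k^{th}$ root of unity.

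Then I would chain this equivalence with the previous corollary to obtain the stated biconditional. There is no obstacle here; the whole content is notational, and the substantive work was done in the preceding proposition (where the factorization reduced $|S|$ to $x_{k_1}^{i_1}\cdots x_{k_s}^{i_1}|V(x_{k_1}^k,\ldots,x_{k_s}^k)|$) and in the standard fact that a Vandermonde determinant vanishes iff two of its arguments coincide. The proof will therefore be a single sentence identifying the two formulations of the same condition.
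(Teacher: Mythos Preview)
Your proposal is correct and matches the paper's treatment: the paper states this corollary without proof, as it is an immediate translation of the preceding corollary's condition $(x_{k_i}x_{k_j}^{-1})^k\neq 1$ into the language of $k^{th}$ roots of unity.
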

\begin{corollary}\label{real} If  the entries $\{x_1, x_2, \ldots, x_n\}$
  are real then 
 matrix $|S|\neq 0$ if either (i) $k$ is odd or (ii) $k$
 is even and $x_i\neq -x_j$ for $i\neq j$. 
 \end{corollary}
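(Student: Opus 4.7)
The plan is to derive this directly from Corollary \ref{unity}, which reduces the question to asking when the ratio $x_{k_i}/x_{k_j}$ (with $i\neq j$) can be a $k^{th}$ root of unity. Under the standing Vandermonde hypothesis the $x_i$ are nonzero and pairwise distinct, so these ratios are well-defined real numbers distinct from $1$.

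The key observation I would invoke is the classification of real $k^{th}$ roots of unity. If $k$ is odd, the polynomial $z^k - 1$ has a single real root, namely $z=1$. If $k$ is even, $z^k-1$ has exactly two real roots, namely $z=1$ and $z=-1$. Since all $x_i$ are real, each ratio $x_{k_i}/x_{k_j}$ is real, so the only way it can be a $k^{th}$ root of unity is for it to equal $1$ or (when $k$ is even) $-1$.

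With that in hand the two cases of the corollary fall out immediately. In case (i), with $k$ odd, the only obstruction to $|S|\neq 0$ would be $x_{k_i}/x_{k_j}=1$, i.e.\ $x_{k_i}=x_{k_j}$; but this is excluded by the distinctness of the $x_i$, so $|S|\neq 0$ unconditionally. In case (ii), with $k$ even, the same reasoning rules out the value $1$ automatically, leaving the single extra condition $x_{k_i}/x_{k_j}\neq -1$, which is precisely the hypothesis $x_i\neq -x_j$ for $i\neq j$ restricted to the chosen row-index set.

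There is no real obstacle here: the content is the classification of real roots of unity and an appeal to Corollary \ref{unity}. The only minor point to mention carefully is that the distinctness of the $x_i$, already part of the Vandermonde hypothesis, automatically excludes the ratio $1$, so the stated conditions in (i) and (ii) are precisely what is needed to exclude $-1$ as well.
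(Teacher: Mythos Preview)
Your proposal is correct and matches the paper's intended approach: the paper states Corollary~\ref{real} without proof, leaving it as an immediate consequence of Corollary~\ref{unity}, and your argument via the classification of real $k^{th}$ roots of unity (only $1$ when $k$ is odd, only $\pm 1$ when $k$ is even) is precisely how that deduction goes.
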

\begin{corollary}\label{pos} If the entries $\{x_1, x_2, \ldots, x_n\}$ 
are real and positive then $|S|\neq 0$. 
\end{corollary}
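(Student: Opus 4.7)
The plan is to deduce Corollary \ref{pos} immediately from the preceding Corollary \ref{unity}, by observing that positive real numbers admit only the trivial $k^{th}$ root of unity.

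First I would recall that by Corollary \ref{unity}, $|S|\neq 0$ if and only if, for every $i\neq j$ with $1\leq i,j\leq s$, the ratio $x_{k_i}x_{k_j}^{-1}$ fails to be a $k^{th}$ root of unity. Under the hypothesis that the entries $x_1,\ldots,x_n$ are real and positive, the quotient $x_{k_i}/x_{k_j}$ is a strictly positive real number.

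Next I would invoke the elementary fact that the only positive real number $r$ satisfying $r^k=1$ is $r=1$: if $r>1$ then $r^k>1$, and if $0<r<1$ then $r^k<1$. Hence the positive real $k^{th}$ roots of unity reduce to the single element $\{1\}$, independently of $k$.

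The last step is to note that the underlying Vandermonde assumption $x_i\neq x_j$ for $i\neq j$ (which is the standing distinctness assumption needed for the Vandermonde determinant to be nonzero) gives $x_{k_i}/x_{k_j}\neq 1$ whenever $i\neq j$. Combining these observations, $x_{k_i}x_{k_j}^{-1}$ is a positive real number different from $1$, hence not a $k^{th}$ root of unity, so Corollary \ref{unity} delivers $|S|\neq 0$.

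There is no genuine obstacle here: the entire content is the one-line remark that $\{r\in\mathbb{R}_{>0}:r^k=1\}=\{1\}$, so the corollary is essentially a direct specialization of Corollary \ref{unity} and the proof should be only a couple of lines long.
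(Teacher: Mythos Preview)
Your proof is correct and matches the paper's intent: the paper states Corollary~\ref{pos} without proof, as an immediate consequence of the preceding corollaries. The only minor difference is that the paper's placement (directly after Corollary~\ref{real}) suggests deducing it from \ref{real} --- positive entries automatically satisfy $x_i\neq -x_j$ --- whereas you go back one step further to Corollary~\ref{unity}; both routes are equally short and valid.
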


\begin{corollary}\label{fouri} When $x_i=\om^{i-1}$ for a primitive
 $n^{th}$ root of unity $\om$ (that is, when $V$ is the Fourier
  $n\times n$ matrix) and $\gcd(k,n)=1$ then $|S|\neq 0$. 
\end{corollary}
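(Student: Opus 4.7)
The plan is to invoke Corollary \ref{unity} and reduce the claim to an elementary statement about when a power of a primitive $n$-th root of unity can itself be a $k$-th root of unity. Concretely, writing $x_i = \omega^{i-1}$, any ratio has the form $x_{k_i}/x_{k_j} = \omega^{k_i - k_j}$, so all that is needed is to verify that, under $\gcd(n,k)=1$ and $i \neq j$ (with $1 \le k_i,k_j \le n$ distinct), $\omega^{k_i-k_j}$ is never a $k$-th root of unity.

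The key observation is that $\omega^{k_i-k_j}$ is a $k$-th root of unity exactly when $\omega^{k(k_i-k_j)} = 1$, which, because $\omega$ has multiplicative order $n$, is equivalent to the divisibility $n \mid k(k_i - k_j)$. Using $\gcd(n,k)=1$ one cancels $k$ and obtains the equivalent condition $n \mid (k_i - k_j)$. Since the chosen column indices $k_i,k_j$ are distinct and lie in $\{1,2,\ldots,n\}$, we have $0 < |k_i - k_j| < n$, so the divisibility fails. Hence $\omega^{k_i-k_j}$ is not a $k$-th root of unity for any $i \ne j$.

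Applying Corollary \ref{unity} to the Vandermonde data $\{x_{k_1},\ldots,x_{k_s}\}$ then yields $|S|\ne 0$, which is the claim. No step is really an obstacle here: the entire content is the bijection between ``$\omega^{a}$ is a $k$-th root of unity'' and ``$n\mid ka$,'' combined with the coprimality hypothesis $\gcd(n,k)=1$; the rest is bookkeeping already done in Corollary \ref{unity}. If one wanted extra robustness, one could also note that $\omega^k$ is itself a primitive $n$-th root of unity under $\gcd(n,k)=1$, so $V(x_{k_1}^k,\ldots,x_{k_s}^k)$ is literally a submatrix of another Fourier-type Vandermonde with distinct entries, giving an alternative route to nonvanishing of its determinant.
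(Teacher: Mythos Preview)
Your proof is correct and follows essentially the same route as the paper: both reduce (via Corollary~\ref{unity}) to showing $(x_{k_i}/x_{k_j})^k \neq 1$, rewrite this as $\om^{k(k_i-k_j)}=1$, use that $\om$ has order $n$ to get $n\mid k(k_i-k_j)$, cancel $k$ via $\gcd(n,k)=1$, and conclude from the range of the column indices. Your closing remark that $\om^k$ is itself primitive is a pleasant but inessential extra observation.
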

\begin{proof} If $(x_{k_1}/x_{k_j})^k= 1$ then
  $(\om^{k_i-1}\om^{1-k_j})^k=1$ and so $\om^{k(k_i-k_j)} = 1$. 
  As $\om$ is a primitive $n^{th}$ root of unity this implies 
    that $k(k_i-k_j) \equiv 0 \mod n$. As $\gcd(k,n)=1$ this implies
  $k_i-k_j \equiv 0 \mod n$ in which case $k_i=k_j$ as $1\leq k_i<n,
  1\leq k_j < n$. 
\end{proof}

\section{Vandermonde matrices}\label{vander}

Let 
$A=V(x_1,x_2,\ldots, x_{n})$ be a  Vandermonde with rows $\{E_0,E_1,\ldots, E_{n-1}\}$. Then
$E_i*E_j=E_{i+j}$. As in Section \ref{rows} let $ C^\perp =
\langle E_{j_1}, E_{j_2}, \ldots,
E_{j_u}\rangle$. By  Corollary \ref{unity} if $C^\perp $ has rows in
arithmetic sequence with arithmetic difference $k$ 
and the ratios $x_i/x_j$ for $i\neq j$ in $A$ are not $k^{th}$ roots
of unity then $C$ (the dual of $C^\perp$) is an $(n,n-2t,2t+1)$ code and is $t$-error
correcting with  $C^\perp$ as the check matrix. As shown in Section
\ref{solve},  $C$ has
an error correcting pair and  Algorithm \ref{algor1} in Section
\ref{algol} may be applied.   


Thus Vandermonde matrices for which $x_i/x_j$ are not roots of unity 
are obvious choices in which to take rows of the matrix which are
evenly spaced.  
Then Theorem \ref{correct3} is satisfied and   
 the decoding Algorithm \ref{algor} or \ref{algor1}
solves the underdetermined system $Aw=y$ with Vandermonde matrix, provided
the number of non-zero entries of $w$ is limited. 

Consider then a Vandermonde matrix

$V=V(\all_1,\all_2,\ldots,\all_n) = \begin{pmatrix}
1&1&\ldots &1 \\ \all_1& \all_2& \ldots& \all_n \\ \vdots & \vdots &
\vdots & \vdots \\ \all_1^{n-1} & \all_2^{n-1} & \ldots &
\all_n^{n-1} \end{pmatrix}$ 

We assume the $\all_i$ are distinct and non-zero.

 Define
$E_k$ to be  $(\all_1^k,\all_2^k, \ldots, \all_n^k)$ for any $k\in
 \Z$. The rows of $V$ are $\{E_0,E_1,\ldots, E_{n-1}\}$.  
\begin{lemma}\label{prod} $E_i*E_j = E_{i+j}$.
\end{lemma}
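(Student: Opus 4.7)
The plan is a direct coordinate-wise verification using the definition of the star product and the exponent law $\alpha^i \alpha^j = \alpha^{i+j}$ in the field.

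First, I would recall that by definition of $E_k$, the $\ell$-th coordinate of $E_k$ is $\alpha_\ell^k$, valid for any $k \in \mathbb{Z}$ (since the $\alpha_\ell$ are assumed non-zero, negative exponents are also meaningful). Next, recall that the star product is defined coordinatewise: $(u \per v)_\ell = u_\ell v_\ell$. Combining these two facts, the $\ell$-th coordinate of $E_i \per E_j$ is $\alpha_\ell^i \cdot \alpha_\ell^j$, which by the standard exponent law in the field equals $\alpha_\ell^{i+j}$, and this is precisely the $\ell$-th coordinate of $E_{i+j}$. Since the two vectors agree in every coordinate $\ell = 1, 2, \ldots, n$, they are equal.

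There is no real obstacle here; the lemma is immediate from the definitions and the monoid law for exponents. The only point worth flagging is that the statement implicitly allows $i,j$ to range over all of $\mathbb{Z}$ (not just the row indices $0,\ldots,n-1$ of $V$), which is why the hypothesis that the $\alpha_\ell$ are non-zero is invoked: it guarantees $\alpha_\ell^k$ is well-defined for negative $k$ so that $E_k$ makes sense beyond the physical rows of $V$. This is precisely the setting needed to apply Theorems \ref{correct4} and \ref{correct3}, where auxiliary vectors such as $E_{i-j}$ (with possibly negative index) are invoked. I would conclude with a one-line remark noting that this lemma verifies condition \ref{eqs3} of Section \ref{solve} for Vandermonde matrices with non-zero nodes.
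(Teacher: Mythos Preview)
Your proposal is correct. The paper actually states this lemma without proof (it is immediate from the definitions), and your direct coordinate-wise verification is exactly the natural argument: the $\ell$-th coordinate of $E_i*E_j$ is $\beta_\ell^i\beta_\ell^j=\beta_\ell^{i+j}$, which is the $\ell$-th coordinate of $E_{i+j}$. (Note the paper's Vandermonde nodes are denoted $\beta_\ell$ rather than $\alpha_\ell$.)
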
 

Thus we obtain the following set-up. Let $A=V(\all_1,\all_2,\ldots,
\all_{n})$ and $Aw=y$. Measurements $E_jw$ (values of $y$) are taken or
known for $j \in M= \{j_1, j_2, \ldots, j_u\}\subset \{0,1,\ldots, n-1\}$  where $u\geq 2t$. 
The elements in $M$ are in arithmetic
progression with difference $k$ and $\all_i/\all_j$ is not a $k^{th}$
root of unity for $i\neq j$. 

The following Algorithm \ref{algor10} finds $w$; this is special case of
Algorithm \ref{algor1} but can be read here independently of this.

Define $F_i=E_{j_i}$ for $j_i \in J$ 
Let $\al_i=<w,E_{j_i}> = E_{j_i}w$ for $j_i\in J$. Let $F_i=E_{j_i}$
for $j_i\in J$.
Thus $\al_i=<w,F_i>$.

\begin{Algorithm}\label{algor10} 

\quad
\begin{enumerate}[label=(\roman*),ref=(\roman*)]

\item Find a non-zero  element $v\T$ of the kernel of 
$E=\begin{pmatrix} \al_1 &\al_2 & \ldots & \al_{t+1} \\ \al_2 & \al_3 &
  \ldots & \al_{t+2} \\ \vdots & \vdots & \vdots & \vdots \\ \al_t &
  \al_{t+1} & \ldots & \al_{2t}\end{pmatrix}$.\label{eqs64}
\item Let $a= (F_1,F_2,\ldots, F_{t+1})v\T$. 
\item Suppose $v\T = (v_1,v_2,  \ldots, v_{t+1})\T$. 
The $i^{th}$ component of
  $a$ is $(v_1\all_i^{j_1} + v_2\all_i^{j_1+k} + \ldots + v_{t+1}
\all_i^{j_1+tk})$; we are interested in when this  is $0$. 

The $i^{th}$ component of $a$ is $0$ if and only if 
$v_1+v_2\all_i^k+v_3\all_i^{2k}+ \ldots + v_{t+1}\all_i^{tk} = 0.$
\label{eqs24}
\item Let $z(a)= \{j | 
a_j=0\}$  which is the set of locations of the zero coordinates of
$a$. 
Suppose $z(a)=\{i_1,i_2, \ldots, i_t\}$ and denote this set by $J$.\label{eqs30}
\item Solve $s_J(x) = s(w)$. This reduces to solving the following:
\begin{eqnarray}\label{est10} 
\begin{pmatrix}\all_{i_1}^{j_1} & \all_{i_2}^{j_1} & \ldots
  &\all_{i_t}^{j_1} \\ \all_{i_1}^{j_2} &
 \all_{i_2}^{j_2} & \ldots &\all_{i_t}^{j_2} \\ \vdots & \vdots & \vdots &\vdots \\ 
\all_{i_1}^{j_{2t}} & \all_{i_2}^{j_{2t}} & \ldots
&\all_{i_t}^{j_{2t}} \end{pmatrix}\begin{pmatrix}x_1 \\ x_2 \\ \vdots \\ x_t
			\end{pmatrix}
= \begin{pmatrix} \al_1 \\ \al_2 \\ \vdots \\ \al_{2t} \end{pmatrix}   
\end{eqnarray}

Since the elements in $M$ have arithmetic difference $k$ so that $j_s=
i_1+(s-1)k$ for $1\leq s \leq 2t$, this equation (\ref{est10}) is equivalent to 
\begin{eqnarray}\label{est11} 
\begin{pmatrix} 1 & 1 & \ldots
  & 1 \\ \all_{i_1}^{k} &
 \all_{i_2}^{k} & \ldots &\all_{i_t}^{k} \\ \vdots & \vdots & \vdots &\vdots \\ 
\all_{i_1}^{(2t-1)k} & \all_{i_2}^{(2t-1)k} & \ldots
&\all_{i_t}^{(2t-1)k} \end{pmatrix}\begin{pmatrix}\be_{i_1}^{j_1}x_1
  \\ \be_{i_2}^{j_1}x_2 \\ \vdots  \\ \be_{i_t}^{j_1}x_t
			\end{pmatrix}
= \begin{pmatrix} \al_1 \\ \al_2 \\ \vdots \\ \al_{2t} \end{pmatrix}   
\end{eqnarray}
\item 
Then $x=(x_1,x_2,\ldots,x_t)$ is obtained from these equations
(\ref{est11}) (or from (\ref{est10})) and $w$ has   
 entries $x_i$ in positions  as determined by $J$ and zeros elsewhere.
\end{enumerate}
 
\end{Algorithm}

The matrix in (\ref{est11}) is a Vandermonde matrix. It
is sufficient to solve the first $t$ equations and the inverse of such a
$t\times t$ Vandermonde type matrix may be obtained in $O(t^2)$
operations.  In connection with item \ref{eqs64}, finding a non-zero
element of the kernel of a Hankel $t\times (t+1)$ matrix can be done in $O(t^2)$
operations.  

In connection with item \ref{eqs24}, consider
$f(x)=v_1+v_2x+v_3x^2+ \ldots + v_{t+1}x^{t}$. It is required to find
those $\all_i$ for which $f(\all_i^k)=0$. By Horner's method
$f(\all_i^k)$ may be determined in $O(t)$ operations and thus finding
all $i$ for which $f(\all_i^k)=0$ can be done in $O(nt)$
operations. Choose $j\in J$ for item \ref{eqs30} if
$f(\all_j^k)=0$. Finding the zeros of $f(x)$ takes the maximum of
$O(nt)$ operations and all other operations take a maximum of $O(t^2)$
operations. 
 
Calculations with Vandermonde type matrices obtained from the Fourier
matrix are known to be stable. 
\subsubsection{Which are best?}
A question then is which Vandermonde matrices are best for working with 
Algorithm \ref{algor10}. The Fourier matrix cases, which have entries in $\cc$, 
are dealt with in Section \ref{Fourier}. 

Which Vandermonde real matrices are best?

Possibilities  for investigation include 
\\ $V=V(\frac{1}{2},\frac{1}{3},\ldots,\frac{1}{n}) = \begin{pmatrix}
1&1&\ldots &1 \\ \frac{1}{2}& \frac{1}{3}& \ldots& \frac{1}{n} \\ \vdots & \vdots &
\vdots & \vdots \\ (\frac{1}{2})^{n-1} & (\frac{1}{3})^{n-1} & \ldots & 
(\frac{1}{n})^{n-1} \end{pmatrix}$ and 
\\ $V=V(\al,\al^2,\ldots,\al^n) =\begin{pmatrix} 1&1&\ldots& 1 \\ \al
&\al^2 & \ldots & \al^n \\ \vdots &\vdots & \vdots &\vdots
\\ \al^{n-1} & \al^{2(n-1)} &\ldots & \al^{n(n-1)} \end{pmatrix}$  
where $\al^i\al^{-j}$ is not a $k^{th}$ root of unity for $i\neq j, 1\leq i,j \leq n$. 
\section{Fourier matrix}\label{Fourier}  

 Suppose now that $A$ is the  Fourier $n\times n$ matrix with rows $\{E_0,E_1,\ldots,E_{n-1}\}$.  Measurements are taken of $Aw$, that is
certain $E_jw$ are taken or known for $j \in J= \{j_1, j_2, \ldots, j_u\}$   
and it is given that $u\geq 2t$ where $t$ is the maximum number of
non-zero entries of $w$. 

\begin{theorem}\label{arith} Suppose the $E_j$ in $C^\perp = \langle
  E_{j_1}, E_{j_2}, \ldots, E_{j_u}\rangle $ are evenly spaced with
  arithmetic difference $k$ satisfying $\gcd(n,k)=1$. Then  any $u\times
  u$ square submatrix of $\hat{C}$ has non-zero determinant.
\end{theorem}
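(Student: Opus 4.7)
The plan is to recognize that the Fourier $n \times n$ matrix is itself the Vandermonde matrix $V(1,\omega,\omega^2,\ldots,\omega^{n-1})$ where $\omega$ is a primitive $n$-th root of unity, and then to apply the determinant factorisation already established for Vandermonde submatrices whose rows lie in an arithmetic progression.

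First I would fix an arbitrary $u \times u$ submatrix $S$ of $\hat{C}$ by selecting columns $\{c_1,c_2,\ldots,c_u\} \subset \{0,1,\ldots,n-1\}$. Because the rows of $\hat{C}$ are $E_{j_1},E_{j_2},\ldots,E_{j_u}$ with $j_s = j_1 + (s-1)k$, the row indices of $S$ sit in arithmetic progression with common difference $k$, and the column indices $\{c_1,\ldots,c_u\}$ may be arbitrary. This is precisely the hypothesis of the Proposition on Vandermonde submatrices proved in the previous section, applied to $V(1,\omega,\omega^2,\ldots,\omega^{n-1})$.

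Invoking that Proposition factorises the determinant as
\[
|S| \;=\; \omega^{j_1(c_1+c_2+\cdots+c_u)} \cdot |V(\omega^{kc_1},\omega^{kc_2},\ldots,\omega^{kc_u})|.
\]
The monomial prefactor is a power of $\omega$ and hence nonzero. It remains to show the Vandermonde factor is nonzero, which reduces to verifying that the $u$ values $\omega^{kc_1},\ldots,\omega^{kc_u}$ are pairwise distinct. This is exactly the content of Corollary \ref{fouri}: if $\omega^{kc_r} = \omega^{kc_{r'}}$ then $k(c_r - c_{r'}) \equiv 0 \pmod n$, and the hypothesis $\gcd(n,k)=1$ forces $c_r \equiv c_{r'} \pmod n$, giving $c_r = c_{r'}$ since both lie in $\{0,1,\ldots,n-1\}$.

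The statement is therefore essentially a corollary of the two results already proved, and there is no real obstacle; the only task is to phrase the choice of submatrix so that the earlier Proposition applies verbatim, with the row arithmetic progression from $J$ playing the role required by its hypothesis, and then to quote Corollary \ref{fouri} to finish.
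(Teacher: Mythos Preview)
Your proof is correct and follows the same route as the paper: the paper's proof is the single line ``This follows directly from Corollary~\ref{fouri},'' and what you have written is precisely an unpacking of that corollary---identifying the Fourier matrix with $V(1,\omega,\ldots,\omega^{n-1})$, invoking the Proposition on Vandermonde submatrices with rows in arithmetic progression, and then using the $\gcd(n,k)=1$ argument (which is exactly the proof of Corollary~\ref{fouri}) to conclude. There is nothing to add.
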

\begin{proof}

This follows directly from Corollary \ref{fouri}.
\end{proof} 

Proposition 7 of \cite{new} may also be  used to prove Theorem
\ref{arith} above. 
This Proposition 7 of \cite{new} is analogous to Chebotar\"ev's theorem.

\begin{corollary}\label{coro} Let $\C$ be the code with check matrix from 
$C^\perp = \langle
  E_{j_1}, E_{j_2}, \ldots, E_{j_{u}}\rangle $ where the $E_{i_j}$ are evenly spaced 
with arithmetic
 difference $k$ satisfying $\gcd(n,k)=1$. Then $\C$ is an mds
 $(n,n-u,u+1)$ code.
\end{corollary}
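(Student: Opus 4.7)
The plan is to deduce the corollary directly from Theorem \ref{arith} together with the standard characterization of minimum distance via the check matrix that was recalled earlier in the paper (Corollary 3.2.3 of \cite{blahut}).

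First I would fix the parameters. By construction $C^\perp$ is generated by the $u$ rows $E_{j_1},\ldots,E_{j_u}$ of the Fourier matrix, and the hypothesis $\gcd(n,k)=1$ together with Corollary \ref{fouri} ensures that these rows are linearly independent (any $u$ of them span a $u$-dimensional space since any $u\times u$ submatrix of the resulting $u\times n$ matrix has nonzero determinant). Hence $C^\perp$ is an $(n,u)$ code and therefore its dual $\C$ has length $n$ and dimension $n-u$. So the length and dimension claims are immediate, and the only thing left to verify is that $d(\C)=u+1$.

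Next I would invoke the relationship between the minimum distance of a linear code and the independence of columns of any check matrix: $\C$ has distance $\ge d$ if and only if every $d-1$ columns of its check matrix are linearly independent. Since $\hat{C}$ (the $u\times n$ matrix whose rows are $E_{j_1},\ldots,E_{j_u}$) is a check matrix for $\C$, it suffices to show that every choice of $u$ columns of $\hat{C}$ is linearly independent. But that is exactly what Theorem \ref{arith} asserts under the hypothesis $\gcd(n,k)=1$: every $u\times u$ submatrix of $\hat{C}$ has nonzero determinant. Consequently $d(\C)\ge u+1$.

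Finally, the Singleton bound gives $d(\C)\le n-(n-u)+1=u+1$, so $d(\C)=u+1$ exactly and $\C$ is an mds $(n,n-u,u+1)$ code, as required. There is no real obstacle here; the substantive content has already been packaged into Theorem \ref{arith} via Corollary \ref{fouri}, and the corollary is simply the translation of that determinantal statement into the language of minimum distance.
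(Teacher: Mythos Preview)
Your proposal is correct and follows exactly the route the paper intends: the corollary is stated immediately after Theorem \ref{arith} without a separate proof, and the paper has already recalled (in Section \ref{rows}, citing Corollary 3.2.3 of \cite{blahut}) that $d(\C)\ge u+1$ is equivalent to every $u$ columns of the check matrix $\hat{C}$ being linearly independent. Combining that with Theorem \ref{arith} and the Singleton bound, just as you do, is precisely the implicit argument.
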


Consider cases where $u> 2t$. Here $\hat{C}$ is a $(n,u)$ matrix and $C$
is a $(n,n-u)$ matrix. It is required that $\mathcal{C}$ be a $t$-error
correcting code and thus it is required that $C$ be a $(n,n-u, \geq(2t+1))$
code. For this to happen it is required that any $2t$ columns of $\hat{C}$ be
linearly independent. 

Let $A$ be Fourier $n\times n$ matrix with rows $\{E_0,E_1,\ldots,
E_{n-1}\}$. When $n$ is prime the code generated by $\langle E_{j_1},
E_{j_2}, \ldots, 
E_{j_u}\rangle$ is an $(n,u,n-u+1)$ code; see \cite{hur1}. 
In this case then $C^\perp = \langle
E_{j_1}, E_{j_2}, \ldots, 
E_{j_u}\rangle$ with $u=2t$ generates  an $(n,2t,n-2t+1)$ code and
$\mathcal{C}$, its dual, is an
$(n,n-2t,2t+1)$ code. Thus $\mathcal{C}$ is a $t$-error correcting code. Now it is
required to find a decoding algorithm for $w$ as  an error word of this
code.

Assume that the $E_{j_k}$ are
evenly distributed, that is, 
$C^\perp = \langle E_i, E_{i+j}, E_{i+2j},
 \ldots E_{i+(2t-1)j}\rangle$ where suffices are taken $\mod n$.  



\subsection{Algorithm for Fourier}\label{algol1}
This Algorithm is a special case of previous algorithms but can be
read here independently. 

Suppose $y=Ax$ where 
 $A$ is an 
$n\ti n$  Fourier matrix, $w$ an $n\ti 1$ unknown vector and 
where $u$ entries of $y$ are known. It is given that 
$w$ has at most $t$ non-zero entries and that  $u\geq 2t$. 
Denote the rows of $A$ by $\{E_0,E_1,\ldots, E_{n-1}\}$. 

Measurements $E_jw$ (values of $y$) are taken or known for $j \in J= \{j_1, j_2,
\ldots, j_u\}$  where $u\geq 2t$. 
We give  an Algorithm to calculate the value of $w$ when the
measurements are in an arithmetic progression (evenly distributed)
with difference $k$ satisfying $\gcd(n,k)=1$.

\subsubsection{Case $k=1$}
We first for clarity give the algorithm when $K=\{1,2,\ldots, 2t\}$. This
is easier to explain and avoids the complicated notation necessary for
the general case given below. The results and algorithm obtained in
this case, where the measurements are taken consecutively, 
are similar to those  in \cite{FRI}.  

The set-up then is that $A$ is the Fourier $n\times n$ matrix with rows
$\{E_0,E_1,\ldots, E_{n-1}\}$ and that
measurements $E_iw$ are taken for $i=1,2,\ldots, 2t$. It is assumed
that $w$ has at most $t$ non-zero entries. 
Then $w$ is  determined as follows:
Let $\al_i=<w,E_i> = E_iw$ for $i\in J=\{1,2,\ldots,2t\}$. 

\begin{Algorithm}\label{algor2}

\quad
\begin{enumerate}

\item Find a non-zero  element $x\T$ of the kernel of 
$E=\begin{pmatrix} \al_1 &\al_2 & \ldots & \al_{t+1} \\ \al_2 & \al_3 &
  \ldots & \al_{t+2} \\ \vdots & \vdots & \vdots & \vdots \\ \al_t &
  \al_{t+1} & \ldots & \al_{2t}\end{pmatrix}$.
\item Let $a= (E_1,E_2,\ldots, E_{t+1})x\T$. (Any non-zero multiple of $a$
      will suffice as we are only interested  in the zero entries of
      $a$. Note that $a$ is a $1\times n$ vector.)

\item Let $z(a)= \{j | 
a_j=0\}$  which is the set of locations of the zero coordinates of
$a$. Suppose $z(a)=\{i_1,i_2, \ldots, i_t\}$ and denote this set by $J$.
\item Solve $s_J(x) = s(w)$. This reduces to solving the following:

\begin{eqnarray}\label{four1}\begin{pmatrix}\om^{i_1-1} & \om^{i_2-1} & \ldots &\om^{i_t-1}\\
      \om^{2(i_1-1)}  &
 \om^{2(i_2-1)} & \ldots &\om^{2(i_t-1)} \\ \vdots & \vdots & \vdots \\ 
\om^{2t(i_1-1)} & \om^{2t(i_2-1)} & \ldots
&\om^{2t(i_t-1)}\end{pmatrix}\begin{pmatrix}x_1 \\ x_2 \\ \vdots \\ x_t
			\end{pmatrix}
= \begin{pmatrix} \al_1 \\ \al_2 \\ \vdots \\ \al_{2t} \end{pmatrix}   
\end{eqnarray}

This is equivalent to solving the following:
\begin{eqnarray}\label{four2} \begin{pmatrix}1 & 1 & \ldots & 1\\
      \om^{2(i_1-1)}  &
 \om^{2(i_2-1)} & \ldots &\om^{2(i_t-1)} \\ \vdots & \vdots & \vdots \\ 
\om^{2t(i_1-1)} & \om^{2t(i_2-1)} & \ldots
&\om^{2t(i_t-1)}\end{pmatrix}\begin{pmatrix}\om^{i_1-1}x_1
  \\  \om^{i_2-1}x_2 \\ \vdots \\ \om^{i_t-1}x_t 
			\end{pmatrix}
= \begin{pmatrix} \al_1 \\ \al_2 \\ \vdots \\ \al_{2t} \end{pmatrix}
\end{eqnarray}   

\item 
The value of $w$ is then obtained from  the solution $x=(x_1,
x_2,\ldots,x_t)$ of equations (\ref{four2}) (or equations
(\ref{four1})) with entries $x_i$ in positions as determined by $J$ and
zero elsewhere.
\end{enumerate}
 
\end{Algorithm}
The complexity of the operations is discussed in Section \ref{complexity}.

\subsubsection{General Fourier case}\label{general1} 
Suppose 
 $A$ is an 
$n\ti n$  Fourier matrix. 
 Denote the rows of $A$ by $\{E_0,E_1,\ldots, E_{n-1}\}$. 

Measurements $E_jw$ are taken or known for $j \in M= \{j_1, j_2,
\ldots, j_{2t}\} \subset \{0,1,\ldots, n-1\}$. The elements in $M$ are in arithmetic
progression with difference $k$ satisfying $\gcd(n,k)=1$. Thus
$j_s=j_1+(s-1)k$ for $s=1,2,\ldots, 2t$. 
Then $w$ is calculated by the following algorithm.

Let $\al_k = <w,E_{j_k}> = E_{j_k}w$ for $j_k\in J$.   
Define $F_i=E_{j_i}$ for $j_i \in J$. Thus $\al_k=<w,F_k> = F_kw$. 

\begin{Algorithm}\label{algor3}

\quad
\begin{enumerate}[label=(\roman*),ref=(\roman*)]

\item Find a non-zero  element $v\T$ of the kernel of 
$E=\begin{pmatrix} \al_1 &\al_2 & \ldots & \al_{t+1} \\ \al_2 & \al_3 &
  \ldots & \al_{t+2} \\ \vdots & \vdots & \vdots & \vdots \\ \al_t &
  \al_{t+1} & \ldots & \al_{2t}\end{pmatrix}$.
\item Let $a= (F_1,F_2,\ldots, F_{t+1})v\T$. 
\item Suppose $v\T = (v_1,v_2,  \ldots, v_{t+1})\T$. 
The $i^{th}$ component of
  $a$ is $v_1\om^{(j_1)(i-1)} + v_2\om^{(j_1+k)(i-1)} + \ldots + v_{t+1}
\om^{(j_1+tk)(i-1)}$; we are interested in when this  is $0$. 

The $i^{th}$ component of $a$ is $0$ if and only if 
$v_1+v_2\om^{(i-1)k}+v_3\om^{(i-1)2k}+ \ldots + v_{t+1}\om^{(i-1)tk} = 0.$
\label{eqs34}

\item Let $z(a)= \{j | 
a_j=0\}$  which is the set of locations of the zero coordinates of
$a$. Suppose $z(a)=\{i_1,i_2, \ldots, i_t\}$ and denote this set by $J$.
\item Solve $s_J(x) = s(w)$. This reduces to solving the following:

\begin{eqnarray}\label{matrix}\begin{pmatrix}\om^{j_1(i_1-1)} &
			       \om^{j_1(i_2-1)} & 
\ldots &\om^{j_1(i_t-1)}\\ \om^{j_2(i_1-1)} &
 \om^{j_2(i_2-1)} & \ldots &\om^{j_2(i_t-1)} \\ \vdots & \vdots & \vdots \\ 
\om^{j_{2t}(i_1-1)} & \om^{j_{2t}(i_2-1)} & \ldots
&\om^{j_{2t}(i_t-1)}\end{pmatrix}\begin{pmatrix}x_1 \\ x_2 \\ \vdots \\ x_t
			\end{pmatrix}
= \begin{pmatrix} \al_1 \\ \al_2 \\ \vdots \\ \al_{2t}
  \end{pmatrix}\end{eqnarray}

Since $j_s=j_1+k(s-1)$ this reduces to solving the following system of
equations: 
\begin{eqnarray}\label{matrix1}\begin{pmatrix}1 &
			       1 & 
\ldots & 1\\ \om^{k(i_1-1)} &
 \om^{k(i_2-1)} & \ldots &\om^{k(i_t-1)} \\ \om^{2k(i_1-1)} &
				\om^{2k(i_2-1)} & \ldots &
				\om^{2k(i_t-1)}\\ \vdots & \vdots & \vdots \\ 
\om^{(2t-1)k(i_1-1)} & \om^{(2t-`)k(i_2-1)} & \ldots
&\om^{(2t-1)k(i_t-1)}\end{pmatrix}\begin{pmatrix}\om^{j_1(i_1-1)}x_1
    \\ \om^{j_1(i_2-1)}x_2 \\ \vdots \\ \om^{j_1(i_t-1)}x_t
			\end{pmatrix}
= \begin{pmatrix} \al_1 \\ \al_2 \\ \vdots \\ \al_{2t}
  \end{pmatrix}\end{eqnarray}

\item 
Then $x=(x_1,x_2,\ldots, x_t)$ is obtained from these equations
(\ref{matrix1}) from which $w$ is derived with entries $x_i$ in
positions as determined by $J$.
\end{enumerate}
 
\end{Algorithm}

\subsubsection{Complexity}\label{complexity} Finding the kernel of $E$ is of $O(t^2)$ as
it involves finding the kernel of a $t\ti (t+1)$ Hankel matrix, which
is this case has dimension $1$ in order to satisfy the given conditions. 
Superfast  algorithms of
$O(t\log^2t)$ with which to find the kernel of  a Hankel matrix have been
proposed.  
 Item \ref{eqs34}, as already pointed out in Section \ref{vander},
  can be done in $O(tn)$
 operations; however by considering  
 a Fourier Transform of
 $(v_1,v_2,\ldots, v_t, 0, \ldots, 0)$ it  can be performed in
 $O(n\log n)$ operations by a Fast Fourier Transform.

The matrix (\ref{matrix}) is a special Vandermonde 
type involving roots of unity only. There is a
formula for the inverse of  any Vandermonde matrix,  the Bj\"ork, Pereyra method 
 \cite{vander}, which involves  $O(t^2)$ operations. 
Finding the inverse of
a Vandermonde matrix with roots of unity is known to be particularly stable.
 The method of  Bj\"ork, Pereyra \cite{vander} 
involves divisions by $(\al_i-\al_j)$ where $\al_i\neq \al_j$ and in
these cases the $\al_k$ are roots of unity. 
  The system (\ref{matrix}) or (\ref{matrix1}) could also be solved using 
the Forney Algorithm/formula, see   \cite{blahut} Chapter 7.

\end{document}